\pgfplotsset{compat=1.7}
\renewcommand{\vec}{\bm}
\renewcommand{\th}{{\text{th}}}
\newcommand{\erm}{L_1\textrm{-ERM}}
\newcommand{\conv}{\mathrm{Conv}}
\newtheorem{theorem}{Theorem}
\newtheorem{lemma}[theorem]{Lemma}
\newtheorem{proposition}[theorem]{Proposition}
\theoremstyle{definition}
\newtheorem{definition}{Definition}
\newtheorem{example}{Example}
\newcommand{\kibitz}[2]{\ifnum\Comments=1{\color{#1}{#2}}\fi}
\renewcommand{\tilde}{\widetilde}
\newcommand{\set}[1]{\{#1\}}
\newcommand{\eps}{\epsilon}
\newcommand{\calD}{\mathcal{D}}
\newcommand{\calF}{\mathcal{F}}
\newcommand{\calP}{\mathcal{P}}
\newcommand{\calS}{\mathcal{S}}
\newcommand{\bbR}{\mathbb{R}}
\newcommand{\bbRbar}{\overline{\bbR}}
\newcommand{\bbN}{\mathbb{N}}
\newcommand{\sign}{\textrm{sign}}
\DeclareMathOperator*{\med}{med} %
\DeclareMathOperator*{\argmin}{arg\,min}
\renewcommand{\hat}{\widehat}
\newcommand{\tempcom}{\vec{x}}
\def\vx{\@ifnextchar(\@f\tempcom}
\def\@f(#1){\vec{x}_{#1}}
\newcommand{\btempcom}{\bar{\vec{x}}}
\def\bvx{\@ifnextchar(\@fbar\btempcom}
\def\@fbar(#1){\bar{\vec{x}_{#1}}}
\renewcommand{\bar}{\overline}
\newcommand{\vk}{\vec{k}}
\newcommand{\vy}{\vec{y}}
\newcommand{\vbeta}{\vec{\beta}}
\newcommand{\vbetaone}{\vec{\beta}_1}
\newcommand{\ty}{\tilde{y}}
\newcommand{\hy}{\hat{y}}
\newcommand{\tf}{\tilde{f}}
\newcommand{\hf}{\hat{f}}
\newcommand{\tS}{\tilde{S}}
\newcommand{\hS}{\hat{S}}
\newcommand{\hR}{\hat{R}}
\newcommand{\tvy}{\vec{\tilde{y}}}
\newcommand{\CWA}{\text{CWA}}
\newcommand{\DA}{\text{DA}}
\newcommand{\pref}{\succcurlyeq}
\newcommand{\stpref}{\succ}
\newcommand{\lbr}{\left\langle}
\newcommand{\rbr}{\right\rangle}
\DeclareMathOperator{\rss}{RSS}
\begin{document}
\title{Strategyproof Linear Regression in High Dimensions\thanks{A preliminary version of this paper was published in the proceedings of the 19th ACM Conference on Economics and Computation (EC), 2018. This work was partially supported by the National Science Foundation under grants CCF-1718549, IIS-1350598, IIS-1714140, CCF-1525932, and CCF-1733556; by the Office of Naval Research under grants N00014-16-1-3075 and N00014-17-1-2428; by a Sloan Research Fellowship and a Guggenheim Fellowship; and by the Natural Sciences and Engineering Research Council of Canada (NSERC) under the Discovery Grants program.}}
\author{Yiling Chen\thanks{Harvard University, Cambridge, MA 02138, USA. {\tt yiling@seas.harvard.edu}.}
	\and Chara Podimata\thanks{Harvard University, Cambridge, MA 02138, USA. {\tt podimata@g.harvard.edu}.}
	\and Ariel D. Procaccia\thanks{Carnegie Mellon University, Pittsburgh, PA 15213, USA. {\tt arielpro@cs.cmu.edu}.}
	\and Nisarg Shah\thanks{University of Toronto, Toronto, ON M5S 3G8, Canada. {\tt nisarg@cs.toronto.edu}.}}
\maketitle
\begin{abstract}
	This paper is part of an emerging line of work at the intersection of machine learning and mechanism design, which aims to avoid noise in training data by correctly aligning the incentives of data sources. Specifically, we focus on the ubiquitous problem of \emph{linear regression}, where \emph{strategyproof} mechanisms have previously been identified in two dimensions. In our setting, agents have single-peaked preferences and can manipulate only their response variables. Our main contribution is the discovery of a family of \emph{group strategyproof} linear regression mechanisms in any number of dimensions, which we call \emph{generalized resistant hyperplane} mechanisms. The game-theoretic properties of these mechanisms --- and, in fact, their very existence --- are established through a connection to a discrete version of the Ham Sandwich Theorem. 
\end{abstract}

\section{Introduction}

Designing machine learning algorithms that are robust to noise in training data is a topic of intense research. A large body of work addresses stochastic noise~\cite{L91,GS95}. On the other extreme, another branch of the literature focuses on adversarial noise~\cite{KL93,BEK02,CCM13}, that is, errors are introduced by an adversary with the explicit purpose of sabotaging the algorithm. The latter approach is often too pessimistic, and generally leads to negative results. 

More recently, some researchers have taken a game-theoretic viewpoint; it suggests a model of \emph{strategic noise} that can be seen as occupying the middle ground of noise models. Specifically, training data is provided by strategic sources --- hereinafter \emph{agents} --- that may intentionally introduce errors \emph{to maximize their own benefit}. Compared to adversarial noise, the advantage of this model (when its underlying assumptions hold true) is that, if we aligned the agents' incentives correctly, it would be possible to obtain uncontaminated data. From this viewpoint, the ideal is the design of learning algorithms that in addition to being statistically efficient, are \emph{strategyproof}, i.e., where supplying pristine data is a dominant strategy for each agent. 

We subscribe to this agenda, and advance it in the context of the ubiquitous problem of linear regression, i.e., fitting a hyperplane through given data. We consider agents who can manipulate their dependent variables in order to minimize their vertical distance from the output hyperplane, and design strategyproof regression mechanisms without payments. 

When does this type of strategic regression problem arise? \citet{DFP10} give the real-world example of the global fashion chain Zara, whose distribution process relies on regression~\cite{CG10}. Specifically, the demand for each product at each store is predicted based on historical data, as well as information provided by store managers. Since the supply of popular items is limited, store managers may strategically manipulate requested quantities so that the output of the regression process would better fit their needs, and, indeed, there is ample evidence that many of them have done so~\cite{CGMT+10}. More generally, as discussed in detail by \citet{PP04}, this type of setting is relevant whenever ``data could come from surveys composed by agents interested in not being perceived as real outliers if the estimation results could be used in the future to change the economic situation of the agents that generate the sample.'' 

\subsection{Our Model and Results}

A bit more formally, we study a linear regression setting in which the task is to fit a hyperplane through data points $(\vx(i),y_i)$ for $i \in \set{1,\ldots,n}$, where $\vx(i) \in \bbR^d$ are the independent variables and $y_i \in \bbR$ is the dependent variable. Following \citet{DFP10} and \citet{PP04}, we assume that the independent variables are public information, but dependent variable $y_i$ is held privately by agent $i$. A mechanism elicits the private information of the agents, and returns a hyperplane represented by vector $\vbeta = (\vbetaone,\beta_0) \in \bbR^{d+1}$. Under this outcome, the residual for agent $i$ is $r_i = y_i - \vbetaone^T \vx(i) - \beta_0$, and, loosely speaking, agents wish to minimize $|r_i|$ (see Section~\ref{sec:model} for a precise description of agent preferences).

Our starting point is the work of \citet{DFP10}, who show that empirical risk minimization (ERM) with the $L_1$ loss (in short, $\erm$), coupled with a specific tie-breaking rule, is group strategyproof, that is, no coalition of agents can be weakly better off by misreporting. We extend this result and show that replacing the $L_1$ loss by a weighted $L_1$ loss and adding convex regularization to the risk function preserves group strategyproofness. But this still gives a relatively restricted family of strategyproof mechanisms, and we seek a broader understanding of what is possible in our setting. 

To that end, we look to the work of \citet{PP04}, who focus on the two-dimensional case (known as \emph{simple linear regression}), i.e., fitting a line through points on a plane. They propose a wide family of strategyproof mechanisms, which they call {\em clockwise repeated median} (CRM) mechanisms. These mechanisms are parametrized by two subsets of agents $S$ and $S'$. \citet{PP04} establish conditions on $S$ and $S'$ under which they claim that CRM mechanisms are strategyproof. We identify a mistake in this result, present counterexamples showing violation of strategyproofness under their conditions, and identify three stricter conditions under which we can recover strategyproofness --- in fact, we prove group strategyproofness. Under one of our conditions, CRM mechanisms coincide with a family of mechanisms from the statistics literature known as \emph{resistant line mechanisms}~\cite{JV85}. Our work therefore establishes the group strategyproofness of these mechanisms.

Our main result is that we generalize the CRM family to higher dimensions, thereby justifying the title of this paper. We introduce the family of {\em generalized resistant hyperplane} (GRH) mechanisms, which, to the best of our knowledge, is the first extension of resistant line mechanisms beyond the plane. In $d+1$ dimensions, GRH mechanisms are parametrized by $d+1$ subsets of agents. Through a surprising connection to the literature on the Ham Sandwich Theorem, we find a condition on the subsets under which GRH mechanisms are group strategyproof. Strikingly, our proof of this general \emph{group strategyproofness} result in \emph{any number} of dimensions is much shorter than the (incorrect) proof of \citet{PP04} for the \emph{strategyproofness} of CRM mechanisms in \emph{two dimensions}.

We also study a property called impartiality, which is stricter than strategyproofness. We establish the existence of a wide family of impartial mechanisms, which, unlike our generalized $\erm$ and generalized resistant hyperplane mechanisms, are strategyproof but not group strategyproof (except for constant functions). Building upon the work of~\citet{Moul80}, we also provide two non-constructive characterizations of strategyproof mechanisms for linear regression. 

Strategyproofness is not the sole desideratum; constant functions (e.g., the flat hyperplane $y=0$) are strategyproof but not necessarily desirable. We would also like the mechanism to have good statistical efficiency. For that, we compare (families of) strategyproof mechanisms in terms of their approximation of the optimal squared loss, leveraging our characterization.  %
Most importantly, we establish a lower bound of $2$ on the approximation ratio of any strategyproof mechanism, which means that any mechanism that is even close to \emph{ordinary least squares} regression must be manipulable.

\subsection{Related Work}

As discussed above, our work is most closely related to that of \citet{PP04} and \citet{DFP10}. Here we try to give a broader picture of the state of research on machine learning algorithms that are robust to strategic noise. This research can be categorized using three key axes: (i) manipulable information, (ii) goal of the agents, and (iii) use of payments and incentive guarantees. 

On the first axis, like us, most papers assume that independent variables (or \emph{feature vectors} in the language of classification) are public information, and dependent variables (labels) are private, manipulable information~\cite{DFP10,MPR12, PP03, PP04}, though some papers also design algorithms robust to strategic feature vectors~\cite{HMPW16, DRZWW17}. 
\citet{MPR12} provide strong positive results for designing strategyproof classifiers when there are either only two classifiers, or the agents are interested in a shared set of input points. On the other hand, \citet{HMPW16} study the problem of constructing classifiers that are robust to agents strategically misreporting their \emph{feature vector}, in order to trick the algorithm into misclassifying them. Their setting is modeled as a one-shot Stackelberg game. The more recent work of \citet{DRZWW17} models the same problem in an online setting; they provide guarantees that ensure that the problem is convex, and, therefore, they are able to derive a computationally efficient learning algorithm that has diminishing \emph{Stackelberg regret}.

On the second axis, one line of research focuses on agents motivated by privacy concerns, with a tradeoff between accuracy and privacy~\cite{CIL15,CDP15}; another focuses on agents who want the algorithm to make accurate assessment on their own sample, even if this reduces the overall accuracy. This form of strategic manipulation has been studied for estimation~\cite{CPS16b}, classification~\cite{MPR10, MAMR11, MPR12}, and regression~\cite{PP04, DFP10} problems. Our problem falls squarely into the second category.  

Finally, on the third axis, various papers differ on whether monetary payments to agents are allowed~\cite{CDP15}, and on how strongly to guarantee truthful reporting: the stronger strategyproofness requirement~\cite{PP03,PP04,MPR12} versus the weaker Bayes-Nash incentive compatibility~\cite{IL13, CIL15}. Our work falls into the literature of mechanism design without money; we study linear regression mechanisms that enforce strategyproofness without paying the agents, or asking the agents to pay.

\section{Model}
\label{sec:model}

Let $[k] \triangleq \{1,\ldots,k\}$ be the set of first $k$ natural numbers, and $\bbRbar = \bbR \cup \set{-\infty,\infty}$ be the extended real line. Given numbers $t_1,\ldots,t_k \in \bbRbar$, let $\min(t_1,\ldots,t_k)$ denote the smallest value, and $\min^j(t_1,\ldots,t_k)$ denote the $j^\th$ smallest value. Let $\med(t_1,\ldots,t_k)$ denote their median: when $k$ is odd, this is equal to $\min^{(k+1)/2}(t_1,\ldots,t_k)$, but when $k$ is even, this could be either $\min^{k/2}(t_1,\ldots,t_k)$ (the ``left median'') or $\min^{k/2+1}(t_1,\ldots,t_k)$ (the ``right median'').\footnote{This is different from the standard definition, which takes the average of the left and right medians, but necessary to ensure incentive guarantees.} 

Our work focuses on the problem of linear regression, i.e., fitting a hyperplane through given data. Let $N = [n]$. We are given a collection of data points $\calD = (\vx(i),y_i)_{i \in N}$, where $\vx(i) \in \bbR^d$ and $y_i \in \bbR$ are called the \emph{independent} and \emph{dependent} variables of point $i$, respectively. Let $\bvx(i) = (\vx(i),1)$. Our goal is to find a vector $\vbeta = (\vbeta_1,\beta_0) \in \bbR^{d+1}$ such that $\vbeta^T \bvx(i) = \vbeta_1^T \vx(i) + \beta_0$ is a good approximation of $y_i$ for each $i \in N$. The quantity $r_i = y_i-\vbeta^T \bar{\vx(i)}$ is called the residual of point $i$. %

\paragraph{Strategic setting.} We study a setting in which each data point $p_i = (\vx(i),y_i)$ is provided by a strategic agent $i$. We also denote the set of agents by $N$. Following \citet{PP04} and \citet{DFP10}, we assume that the independent variables $\vx = (\vx(i))_{i \in N}$ constitute \emph{public} information, which the agents cannot manipulate. Each agent $i$ holds the dependent variable $y_i$ as private information, and may report a different value $\ty_i$ in order to receive a more preferred outcome. Thus, the principal observes the reported data points $\tilde{\calD} = (\vx(i),\ty_i)_{i \in N}$. Let us denote $\vy = (y_i)_{i \in N}$ and $\tvy = (\ty_i)_{i \in N}$. 

\paragraph{Mechanisms.} Because the agents cannot change $\vx$, we can effectively treat it as fixed. A mechanism for linear regression $M^{\vx}$ is therefore defined for given public information $\vx$, takes as input reported private information $\tvy$, and returns a vector $\vbeta$. %
We omit $\vx$ when it is clear from the context.

\paragraph{Agent preferences.} When a mechanism returns $\vbeta$, we say that the outcome for agent $i$ is $\hy_i(\vbeta) = \vbeta^T\ \vx(i)$. We omit $\vbeta$ when it is clear from the context. The agent only cares about her own outcome $\hy_i$, and would like it to be as close to $y_i$ as possible. Formally, we assume that agent $i$ has \emph{single-peaked preferences}~\cite{Black58,Moul80} over $\hy_i$ with peak at $y_i$. We represent the weak preference relation by $\pref_i$ and the strict preference relation by $\stpref_i$. Formally, for all $a,b \in \bbR$, $y_i > a \ge b$ or $y_i < a \le b$ must imply $y_i \stpref_i a \pref_i b$. %

\paragraph{Game-theoretic desiderata.} Our goal is to prevent agents from misreporting their private information. The game theory literature offers a strong desideratum under which agents have no incentive to misreport even if they have know what the other agents would report.

\begin{definition}[Strategyproofness]
A mechanism $M^{\vx}$ is called \emph{strategyproof} (SP) if each agent weakly prefers truthfully reporting her private information to misreporting it, regardless of the reports of the other agents. Formally, for each $i \in N$, $y_i \in \bbR$, and $\tvy \in \bbR^n$, we need $\hat{y_i}(M^{\vx}(y_i,\tvy_{-i})) \pref_i \hat{y_i}(M^{\vx}(\tvy))$. Note that this must hold for any possible single-peaked preferences the agent may have.
\end{definition}

While no individual agent can benefit from misreporting under a strategyproof mechanism, a group of agents may still be able to collude, and benefit by simultaneously misreporting. This can be prevented by imposing a stronger desideratum.

\begin{definition}[Group Strategyproofness]
A mechanism $M^{\vx}$ is called \emph{group strategyproof} (GSP) if no coalition of agents can simultaneously misreport in a way that no agent in the coalition is strictly worse off and some agent in the coalition is strictly better off, irrespective of the reports of the other agents. Formally, for each $S \subseteq N$, $\vec{y_S} = (y_i)_{i \in S} \in \bbR^{|S|}$, and $\tvy \in \bbR^n$, it should not be the case that $\hy_i(M^{\vx}(\tvy)) \pref_i \hy_i(M^{\vx}(\vy_S,\tvy_{N\setminus S}))$ for every $i \in S$, and %
the preference is strict for at least one $i \in S$.
\end{definition}

The game theory literature also considers a weaker notion of group strategyproofness in which not all the agents in a manipulating coalition should be strictly better off. We do not consider this notion because our group strategyproof mechanisms are able to satisfy the stronger notion. 

Note that we do not assume that the data points are generated by an underlying statistical process. Our results are independent of how the data points were generated.

\section{Families of Strategyproof Mechanisms}
\label{sec:spmechs}

In this section, we analyze families of (group) strategyproof mechanisms for linear regression. Our results generalize existing families of mechanisms, and propose novel families. 

\subsection{Empirical Risk Minimization with the $L_1$ Loss}
\label{subsec:l1-ERM}

Consider a single dimensional setting, in which each agent $i$ has a private value $y_i$, reports a possibly different value $\ty_i$, and the mechanism returns a single value $\hy$. Each agent $i$ has single-peaked preferences over $\hy$ with peak at $y_i$. This corresponds to the special case of our setting in which $\vx(i) = \vx(j)$ for all $i,j \in N$, or alternatively, the dimension $d=0$. In this setting, it has long been known that choosing the {\em median} of the reported values achieves group strategyproofness~\cite{DF61}. %
It can be shown that the median minimizes the sum of absolute ($L_1$) losses with respect to the reports, i.e., given $\vy$, it chooses $\argmin_{y \in \bbR} \sum_{i=1}^n |y-y_i|$, with an appropriate tie-breaking when $n$ is even. In the machine learning terminology, the median is the empirical risk minimizer (ERM) with the $L_1$ loss. 

Inspired by this, \citet{DFP10} study ERM with the $L_1$ loss in a more general regression setting, and show that it remains group strategyproof. Specifically, they focus on finding a (potentially non-linear) regression function $f: \bbR^d \to \bbR$ from a given convex set $\calF$. %
Given $\calD = (\vx(i),y_i)_{i \in N}$, define the empirical $L_1$ risk of a regression function $f \in \calF$ as $\hat{R}(f,\calD) = \sum_{i \in N} |y_i-f(\vx(i))|$. Let $\|\cdot\| : \calF \to \bbR$ be a strictly convex function. They show that minimizing the empirical $L_1$ risk, and breaking ties among the optimal solutions by minimizing $\|\cdot\|$ is group strategyproof. We refer to this mechanism by $\erm$ \footnote{For a formal description of the algorithm, we refer the interested reader to the full version of our paper.}. For linear regression, this approach is known by various names in the literature, such as Least Absolute Deviations (LAD), Minimum Sum of Absolute Errors (MSAE), or Least Absolute Value (LAV). The tie-breaking step is crucially required because the empirical $L_1$ risk may have multiple minimizers.

We present a generalization of their mechanism while retaining group strategyproofness. In particular, we extend the objective function $\hat{R}$ in two ways: i) we allow a weighted $L_1$ loss, in which the loss of each agent $i$ is multiplied by a weight $w_i^{\vx}$, and ii) we allow adding a convex regularizer $h : \calF \to \bbR$. Note that regularization is widely used in machine learning to prevent ERM from overfitting. Our generalization, which we term {\em generalized $\erm$}, is presented as Algorithm~\ref{alg:gen-l1}. While we are only interested in linear regression, we note that generalized $\erm$ works for the general regression setting of~\citet{DFP10}. 

\begin{algorithm}[ht]
	\SetAlgoLined
	\KwIn{Data points $\calD = (\vx(i),y_i)_{i \in N}$, convex hypothesis space $\calF$, constants $(w_i^{\vx})_{i \in N}$, convex regularizer $h : \calF \to \bbR$, strictly convex function $\|\cdot\|: \calF \to \bbR$.}
	\KwOut{Function $f^* \in \calF$.}
	\vskip .5em
	$\forall f \in \calF,\ \hat{R}(f,\calD) \triangleq \sum_{i \in N} w_i^{\vx} \cdot |y_i-f(\vx(i))| + h(f)$\;
	$r^* \gets \inf_{f \in \calF}\; \hat{R}(f,\calD)$\;
	\Return $f^* \gets \argmin_{f \in \calF : \hat{R}(f,\calD) = r^*}\; \|f\|$\;
	\caption{Generalized $\erm$ (Regularized ERM with a weighted $L_1$ loss)}
	\label{alg:gen-l1}
\end{algorithm}

\begin{theorem}\label{thm:gen-l1-gsp} 
	Generalized $\erm$ is a group strategyproof regression mechanism.
\end{theorem}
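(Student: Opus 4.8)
The plan is to establish group strategyproofness directly from the convex-analytic structure of Algorithm~\ref{alg:gen-l1}, following the template of \citet{DFP10} and then verifying that the two new ingredients---the positive weights $w_i^{\vx}$ and the convex regularizer $h$---do not disturb the incentive argument. First I would record two easy structural facts. (i) \emph{Uniqueness}: since $\calF$ is convex, each map $f\mapsto w_i^{\vx}|y_i-f(\vx(i))|$ is convex (a V-shaped function composed with the linear evaluation $f\mapsto f(\vx(i))$) and $h$ is convex, so $\hat{R}(\cdot,\calD)$ is convex; its set of minimizers is therefore convex, and minimizing the strictly convex $\|\cdot\|$ over it yields a \emph{unique} output $f^*$. (ii) \emph{Reduction via single-peakedness}: because the guarantee must hold for \emph{all} single-peaked preferences, I would first show that ``$a\pref_i b$ for every single-peaked preference with peak $y_i$'' is equivalent to ``$a$ lies in the closed interval between $y_i$ and $b$.'' Consequently, a coalition deviation can violate GSP only if \emph{each} member's truthful outcome $a_i$ and manipulated outcome $b_i$ satisfy this per-member betweenness, strictly for at least one member.

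Concretely, fix a coalition $S$, true peaks $\vy_S$, and the fixed reports $\tvy_{N\setminus S}$ of the others; let $f$ be the output on $(\vy_S,\tvy_{N\setminus S})$ with coalition outcomes $a_i=f(\vx(i))$, and $g$ the output on $(\tvy_S,\tvy_{N\setminus S})$ with $b_i=g(\vx(i))$. By the reduction it suffices to derive a contradiction from the assumption that for every $i\in S$ the point $b_i$ lies weakly between $y_i$ and $a_i$, strictly for some $i^\star$. I would split the objective as $\hat{R}=\sum_{i\in S}w_i^{\vx}\,|\,\cdot-\text{(report)}\,|+C(\cdot)$, where the common term $C(f')=\sum_{j\notin S}w_j^{\vx}|\ty_j-f'(\vx(j))|+h(f')$ collects the others and the regularizer and is \emph{identical} in both runs. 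The heart of the argument is a \emph{saturation} observation: at the truthful optimum $f$, each member $i$ with $a_i\neq y_i$ already contributes the maximal possible pull of magnitude $w_i^{\vx}$ toward its own peak (its term's subgradient is $\pm w_i^{\vx}$), while betweenness forces $\operatorname{sign}(b_i-y_i)=\operatorname{sign}(a_i-y_i)$ whenever $b_i\neq y_i$. I would then combine the first-order (subgradient) optimality conditions at $f$ and $g$ with the normal cone of $\calF$ and the shared subgradient of $C$ to show that no admissible reports can increase the net pull toward every member's peak simultaneously: moving one member's outcome strictly toward its peak must rely on some other member's pull being redirected \emph{away} from its peak, which betweenness forbids. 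This shows $i^\star$ cannot be strictly better off, contradicting the assumption.

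I expect the main obstacle to be precisely this last step: the outcomes $\{b_i\}_{i\in S}$ are \emph{coupled} through the single hyperplane---equivalently, through the shared convex term $C$ and the affine constraints defining $\calF$---so a clean per-coordinate ``median'' argument is unavailable, and the pairwise value inequalities $\hat{R}_{\mathrm{truth}}(f)\le \hat{R}_{\mathrm{truth}}(g)$ and $\hat{R}_{\mathrm{manip}}(g)\le \hat{R}_{\mathrm{manip}}(f)$ are by themselves insufficient, since they bound only a \emph{sum} of residual changes, which can decrease even while some member is strictly hurt. The delicate part is reconciling the manipulated optimality conditions, phrased in terms of the reports $\ty_i$, with the truthful ones, phrased in terms of the peaks $y_i$, \emph{jointly} over all of $S$; betweenness is exactly what aligns the relevant signs and lets the saturation bound bite. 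By contrast, the two new ingredients are benign: the weights merely rescale each saturation bound from $1$ to $w_i^{\vx}>0$, and because $h$ is report-independent it sits inside $C$ and cancels in every comparison between the truthful and manipulated runs, so the core reduces to the argument of \citet{DFP10}. Zero-weight agents, if permitted, cannot influence the output at all and are handled trivially.
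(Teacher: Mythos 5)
Your proposal has a genuine gap --- in fact two. The first and most serious: the heart of the proof is missing. You correctly diagnose that the two optimality inequalities (truthful risk of $\hat{f}$ versus $g$, manipulated risk of $g$ versus $\hat{f}$) only bound a \emph{sum} of residual changes and cannot by themselves yield a per-member guarantee, but the device you offer in its place --- a ``saturation'' bound combined with subgradient/normal-cone conditions at the two optima --- is never executed; the decisive claim (``moving one member's outcome strictly toward its peak must rely on some other member's pull being redirected away from its peak'') is precisely what needs proof, and it is not the argument of \citet{DFP10} either, so deferring to them at the end does not fill the hole. The paper closes this gap with a concrete interpolation device: it compares the output $\hat{f}$ on the hybrid data (coalition truthful, others as reported) with the output $\tilde{f}$ on the reported data, forms $f_{\alpha}=\alpha\tilde{f}+(1-\alpha)\hat{f}$, chooses $\alpha=\min_i \alpha_i$ so that each agent's change-of-report penalty is the same constant $c_i$ under $f_{\alpha}$ as under $\hat{f}$ (here the weights just multiply the per-agent identities and the report-independent regularizer cancels), deduces from convexity that $\hat{f}$ and $\tilde{f}$ \emph{both} minimize the manipulated risk, and then --- crucially --- invokes the strictly convex tie-breaking norm $\|\cdot\|$ to force $\hat{f}=\tilde{f}$, a contradiction. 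Note that the tie-breaking rule is indispensable in this endgame, whereas your first-order sketch never engages with it: subgradient conditions characterize the set of risk minimizers, not which one the mechanism outputs.

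Second, your single-peakedness reduction has a quantifier error. To violate GSP, each coalition member's \emph{own} (adversarially chosen) single-peaked preference must weakly favor the manipulated outcome $b_i$ over the truthful one $a_i$; since single-peaked preferences may be asymmetric about the peak, such a preference exists whenever $a_i$ is \emph{not} strictly between $y_i$ and $b_i$ --- in particular whenever $a_i$ and $b_i$ lie on opposite sides of the peak. Your assumed configuration (every member's $b_i$ weakly between $y_i$ and $a_i$) is therefore far stronger than the negation of GSP, so even a successful contradiction from it would not prove the theorem. The correct target, which the paper's argument delivers, is to exhibit one manipulator with $y_i \le a_i < b_i$ or $b_i < a_i \le y_i$, since then \emph{every} single-peaked preference with peak $y_i$ makes that agent strictly worse off. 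Your observations that the weights merely rescale per-agent terms and that the regularizer, being report-independent, cancels in the relevant comparisons are correct and match the paper --- but they are the easy part of the proof.
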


Our proof, presented in Appendix~\ref{appendix:missing-proofs} for completeness, essentially mirrors the proof of \citet{DFP10}; we identify three steps in their proof where they use the structure of the risk function $\hat{R}$, and observe that these steps follow through with our more general risk function. 

There are several potential advantages of generalized $\erm$ over the vanilla $\erm$. First, generalized $\erm$ allows eliminating the tie-breaking step if the new risk function is guaranteed to have a unique minimizer. For instance, adding a {\em strictly convex} regularizer would achieve this. 

Second, for the aforementioned single dimensional setting, \citet{Moul80} proved that every strategyproof\footnote{\citet{Moul80} shows that for the single dimensional setting, strategyproofness is equivalent to group strategyproofness.} and anonymous\footnote{A mechanism is anonymous if permuting the reports of the agents does not change the output of the mechanism. This is a reasonable desideratum in the single dimensional setting due to the absence of public information that distinguishes agents naturally.} mechanism is a {\em generalized median}: for every $\alpha_1,\ldots,\alpha_{n+1} \in \bbRbar$, the corresponding generalized median returns $\med\set{y_1,\ldots,y_n,\alpha_1,\ldots,\alpha_{n+1}}$. Here, $\set{\alpha_j}_{j \in [n+1]}$ are called ``phantoms''. We can alternatively view this as returning $\argmin_{y \in \bbR} \sum_{i \in [n]} |y-y_i| + h(y)$, where $h(y) = \sum_{j \in [n+1] \text{ s.t. } \alpha_j \in \bbR} |y-\alpha_j| + (k_{-\infty}-k_{\infty}) \cdot y$, and for $t \in \set{-\infty,\infty}$, $k_t = |\set{j : \alpha_j = t}|$.\footnote{When all phantoms are finite, $h(y) = \sum_{j \in [n+1]} |y-\alpha_j|$. The term $|y-\alpha_j|$ has derivative $1$ when $y > \alpha_j$, and $-1$ when $y < \alpha_j$. For $\alpha_j = -\infty$ (resp. $\infty$), we can mimic this effect by adding a different term whose derivative is always $-1$ (resp. $1$).} Since $h(y)$ is a convex function, we can view it as a regularizer in our generalized $\erm$. Hence, for the single dimensional setting, generalized $\erm$ covers all generalized medians. In contrast, $\erm$ reduces to a specific mechanism in this family, the median. %

Finally, algorithms that add convex regularization to $\erm$ have been studied in the machine learning literature~\cite{WGZ06,Wang13}; our generalization establishes group strategyproofness of these algorithms.

We also note that in the statistics literature, the vanilla $\erm$ is treated as a member of the more general family of {\em quantile regression} mechanisms~\cite{KB79}, which, given $q \in [0,1]$, minimize the following empirical risk function: 
\begin{equation}
\hat{R}_q(f,\calD) = \sum_{i \in N: y_i \geq f(\vx(i))} q \cdot |y_i - f(\vx(i))| + \sum_{i \in N : y_i < f(\vx(i))} (1-q) \cdot |y_i - f(\vx(i))|.
\end{equation}
$\erm$ corresponds to the choice of $q=0.5$. In the one-dimensional setting, other values of $q$ correspond to different quantiles (i.e., correspond to $\min^k$ for various $k$), and thus induce strategyproof mechanisms. One might wonder if quantile regression remains strategyproof in higher dimensions. We answer this \emph{negatively} by providing an example in Appendix~\ref{appendix:qERM-cnt}, in which the quantile regression mechanism for $q=0.4$ is shown to violate strategyproofness. It is an interesting question to discover a strategyproof version of quantiles for linear regression.

\subsection{Generalized Resistant Hyperplane Mechanisms}
\label{sec:CRM}

In this section, we introduce a novel family of strategyproof mechanisms for linear regression. Our family extends the known family of resistant line mechanisms from the statistics literature~\cite{JV85}, which were only defined for simple linear regression ($d=1$), to higher dimensions. We first take a slight detour through a previous approach in the literature.

\subsubsection{A Detour Through Clockwise Repeated Median Mechanisms}

\citet{PP04} introduced a novel family of mechanisms, which they termed \emph{Clockwise Repeated Median} (CRM) mechanisms. CRM mechanisms are only defined for the special case of \emph{simple linear regression}, i.e., for fitting a straight line through a set of points on a plane. In describing these mechanisms, we use scalar notations where possible. For instance, we use $x_i$ to denote the x-coordinate of agent $i$, and $\beta_1$ to denote the slope of the regression line. For CRM mechanisms to be well defined, we also need to assume that the set of points is ``admissible''.

\begin{definition}[Admissible Set]
	A collection of data points $\mathcal{D} = (x_i, y_i)_{i \in N}$ is called \emph{admissible} if $x_i \neq x_j$ for all distinct $i,j \in N$.
\end{definition}

The CRM family is parametrized by two subsets of agents, $S,S' \subseteq N$. These subsets must be chosen based on the public information $\vx$, and therefore can be treated as fixed. Informally, given $S,S' \subseteq N$, the $(S,S')$-CRM mechanism first computes the median \emph{clockwise angle} (CWA), defined below, from each point $i \in S$ to points in $S'$. Then, it chooses the point $i^* \in S$ whose median CWA is the median of the median CWAs from all points in $S$. If the median CWA from point $i^*$ is towards point $j^* \in S'$, then the mechanism returns the straight line passing through points $i^*$ and $j^*$. Formally, the mechanism is defined as follows. \citet{PP04} established the equivalence of this formal definition and the aforementioned informal description.

\begin{definition}[CRM Mechanisms]
	Define the {\em clockwise angle} (CWA) from $(x_i,y_i)$ to $(x_j,y_j)$ as:
	\begin{equation}\label{eq:cwa}
	\CWA((x_i, y_i), (x_j,y_j)) = \pi + \sign(x_j - x_i)\cdot \frac{\pi}{2} + \sign \left( \frac{y_j - y_i}{x_j - x_i}\right) \left| \arctan \left(\frac{y_j - y_i}{x_j - x_i} \right) \right|.
	\end{equation}
	Given $\calD = (x_i,y_i)_{i \in N}$ and $S,S' \subseteq N$, let the {\em directing angle} be defined as:
	\begin{equation}\label{eq:DA}
	\DA(S,S') = \med_{i \in S} \med_{j \in S' : j \neq i} \CWA((x_i, y_i),(x_j, y_j)).
	\end{equation}
	Then, the $(S,S')$-CRM mechanism returns the line $\vbeta = (\beta_1,\beta_0)$ given by:
	\begin{align}
	\begin{split}
	\beta_1 &= \tan \left[ \DA(S,S') -\pi - \frac{\pi}{2}\cdot\sign\left( \DA(S,S') - \pi \right) \right],\\
	\beta_0 &= \med_{i \in S}\ (y_i - \beta_1\cdot x_i).
	\end{split}
	\label{eq:CRM}
	\end{align}	
\end{definition}

First, we notice that the definition of the CRM family uses three medians: two to define the directing angle $DA(S,S')$, and one to define the $y$-intercept $\beta_0$. Each median, when taken over an even number of values, can be the left median or the right median. While \citet{PP04} do not mention how these choices should be made, it is easy to check that in order to achieve the desired incentive properties, these choices cannot be made independently of each other. Later, we present a generalization which captures the different feasible choices in a simpler form.

\citet{PP04} claimed that the $(S,S')$-CRM mechanism is strategyproof when $S \subseteq S'$ or $S \cap S' = \emptyset$, and provided an involved, geometric proof. However, we have identified a mistake in their proof. In fact, we have found two counterexamples, one with $S \subseteq S'$ and one with $S \cap S' = \emptyset$, for which the corresponding $(S,S')$-CRM mechanisms violate strategyproofness, thus disproving their claim. These counterexamples are presented in Figure~\ref{fig:cnt}, 

\begin{figure}
	\centering
	\begin{subfigure}{0.5\linewidth}
		\centering
		\begin{tikzpicture}[scale=0.5]
		\begin{axis}[
		legend pos=outer north east
		]
		\addplot [only marks] table {
			1 0
			3 1
			5 1.9
		};
		\addlegendentry{Point in $S$}
		\addplot [only marks, mark=o] table {
			0 1
			2 2 
			4 3 
		};
		\addlegendentry{Point in $S'$}
		
		\addplot [only marks, mark = x] table{
			4 1.8
		};
		\addlegendentry{Deviation}

		\addplot [domain=-0.5:6, samples=2] {1};
		\addlegendentry{CRM before deviation}
		
		\addplot [domain=-0.5:6, samples=2, dashed] {0.1*x + 1.4};
		\addlegendentry{CRM after deviation}
		
		\end{axis}
		\end{tikzpicture}
		\caption{$S \cap S' = \emptyset$} \label{fig:cnt1}
	\end{subfigure}%
	\begin{subfigure}{0.5\linewidth}
		\centering
		\begin{tikzpicture}[scale=0.5]
		\begin{axis}[
		legend pos=outer north east
		]
		\addplot [only marks] table {
			3 12
			9 9.5
			11 9
			13 4.5
			14 11
		};
		\addlegendentry{Point in $S$}
		\addplot [only marks, mark=o] table {
			3 12
			4 8
			4.3 12
			7 6.5
			8 7.5
			9 9.5
			11 9
			12 11
			13 4.5
			14 11
		};
		\addlegendentry{Point in $S'\setminus S$}
		\addplot [only marks, mark = x] table{
			12 2
		};
		\addlegendentry{Deviation}

		\addplot [domain=0:15, samples=2] {0.5*x + 3.5};
		\addlegendentry{CRM before deviation}
		
		\addplot [domain=0:15, samples=2,dashed] {0.5833*x + 2.833};
		\addlegendentry{CRM after deviation}
		
		\end{axis}
		\end{tikzpicture}
		\caption{$S \subseteq S'$}\label{fig:cnt2}
	\end{subfigure}
	\caption{Counterexamples showing violation of strategyproofness of $(S,S')$-CRM mechanisms. Figure~\ref{fig:cnt1} shows a case with $S \cap S' = \emptyset$, while Figure~\ref{fig:cnt2} shows a case with $S \subseteq S'$.}
	\label{fig:cnt}
\end{figure}
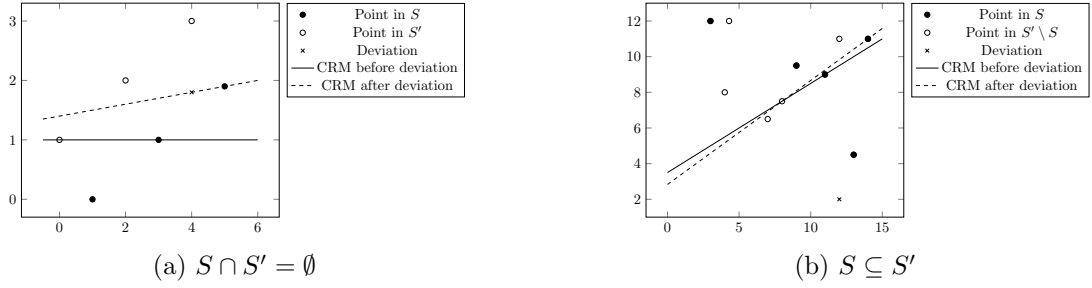

\begin{example}[Example with $S \cap S' = \emptyset$.]
	This example is shown in Figure~\ref{fig:cnt1}. Points in filled dots are in $S$, while points in empty dots are in $S'$. The coordinates of these points are as follows.
	\begin{align*}
	S = \left\{(1,0), (3,1), (5,1.9)\right\}, S' = \left\{(0,1),(2,2), (4,3)\right\}. 
	\end{align*}
	
	Notice that $S \cap S' = \emptyset$. Also, $|S|$ and $|S'|$ are odd, alleviating the need to choose between left and right medians in the CRM definition. 
	
	When the agents truthfully report, one can check that CRM returns the line connecting points $(3,1)$ from $S$ and $(0,1)$ from $S'$. This line is given by the equation $y=1$. 
	
	Suppose that the agent $i$ controlling the point at $x=4$ misreports $\ty_i= 1.8$ instead of $y_i = 3$. The new point is depicted with a cross. One can check that this causes the CRM mechanism to switch to the dashed line ($y = 0.1 \cdot x + 1.4$), which makes agent $i$ strictly better off, and violates strategyproofness.
\end{example} 

\begin{example}[Example with $S \subseteq S'$.]
	This example is shown in Figure~\ref{fig:cnt2}. Points in $S$ (thus also in $S'$) are depicted with filled dots, while points in $S'\setminus S$ are depicted with empty dots. The coordinates of these points are as follows. 
	\begin{align*}
	S = \left\{(3, 12), (9, 9.5), (11, 9), (13, 4.5), (14,11) \right\}, S' = S \cup \left\{(4,8), (4.3, 12), (7, 6.5), (8, 7.5), (12, 11) \right\}. 
	\end{align*}
	
	Notice that $S \subseteq S'$. Further, $|S|$ is odd, and $|S'|$ is even (thus, for each $i \in S$, $|S' \setminus \set{i}|$ is odd), once again eliminating the need to choose between the left and the right medians in the CRM definition. 
	
	When all points are reported truthfully, one can check that the CRM mechanism chooses the solid line ($3y = 2x+8$). Suppose now that agent $i$ with point $(12,11)$ reports $\ty_i = 0$, instead of $y_i = 11$. Then, the CRM mechanism chooses the dashed line, which makes agent $i$ strictly better off, again violating strategyproofness.
\end{example}

Nevertheless, we have been able to identify a subset of the CRM family, for which we can establish strategyproofness (in fact, group strategyproofness). In particular, we replace $S \subseteq S'$ with the more restrictive condition $S = S'$, and for $S \cap S' = \emptyset$, we either add $|S|=1$ or $|S'|=1$, or replace it with a stricter condition that we define below. 

\begin{definition}[Separable Sets of Points in a Plane]\label{def:sep}
	Let $S, S'$ be two sets of points in $\bbR^2$. We say that $S$ and $S'$ are \emph{separable} if $\max_{i \in S} x_i < \min_{j \in S'} x_j$ or $\max_{j \in S'} x_j < \min_{i \in S} x_i$. In other words, it should be possible to separate them by a vertical line. 
\end{definition}

Note that separability of $S$ and $S'$ implies $S \cap S' = \emptyset$. We now present a corrected version of the result of \citet{PP04}, and claim the stronger guarantee of group strategyproofness. We do not present a proof as we later introduce a much broader family of mechanisms, and prove their group strategyproofness directly. 

\begin{theorem}\label{thm:crm-corrected}
	Given $S,S' \subseteq N$, the $(S,S')$-CRM mechanism is group strategyproof if one of the following conditions holds.
	\begin{enumerate}
		\item $S = S'$.
		\item $S$ and $S'$ are separable.
		\item $S\cap S' = \emptyset$ and $\min(|S|,|S'|) = 1$.
	\end{enumerate}
\end{theorem}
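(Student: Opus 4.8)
The plan is to reduce each of the three cases to the (group) strategyproofness of the one-dimensional median recalled above (\citet{DF61}). The common thread is that under each condition the output line is \emph{pinned} by the median structure so that every agent's outcome $\hat{y}_i(\ty_i)$ is a monotone, \emph{uncompromising} function of her own report in the sense of \citet{Moul80}: pushing her report further from the current outcome never moves it, and pushing it across never overshoots. Such functions are precisely the strategyproof ones for single-peaked preferences, and the combinatorial structure of the median will let me lift single-agent guarantees to coalitions.

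First I would isolate the slope. Note that $\CWA((x_i,y_i),(x_j,y_j))$ is a strictly monotone function of the signed slope $(y_j-y_i)/(x_j-x_i)$ once $\sign(x_j-x_i)$ is fixed; hence $\DA(S,S')$ is, up to this reparametrization, a repeated median of pairwise slopes and $\beta_1$ is a median-of-medians of slopes. I would then treat the cases. (i) $\min(|S|,|S'|)=1$, say $S=\{i_0\}$: then $\beta_0=y_{i_0}-\beta_1 x_{i_0}$, so the line passes exactly through the single $S$-point and $\hat{y}_{i_0}=y_{i_0}$ for every report, making $i_0$ trivially truthful. For $j\in S'$ the line pivots about $i_0$, so $\hat{y}_j=y_{i_0}+\beta_1(x_j-x_{i_0})$ is monotone in $\beta_1$, and $\beta_1$ is a \emph{single} median of the CWAs emanating from $i_0$; since $\CWA$ is a monotone bijection of each $y_j$, this reduces verbatim to a one-dimensional median of angles, whose group strategyproofness handles any $S'$-coalition. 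The symmetric singleton $|S'|=1$ is handled by a similar pinning argument. (ii) $S=S'$ and (iii) $S,S'$ separable: here I would show the repeated median of slopes and the intercept median together define a line whose per-agent outcome stays uncompromising. Separability forces $\sign(x_j-x_i)$ constant across the relevant pairs, so the CWAs are genuinely comparable and the double median of slopes is well-behaved (this is essentially Siegel's repeated-median line); for $S=S'$ the same set drives both the slope and the intercept, forcing the line through ``median'' data so that a unilateral move either fixes the line or slides the agent's outcome monotonically. In each case I would verify uncompromisingness of $\hat{y}_i(\ty_i)$ and lift to coalitions by the standard fact that moving one coupled median beneficially for a coalition member requires a report change that moves another coupled median adversely for a different member.

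The main obstacle will be the coupling between slope and intercept: an agent's report $\ty_i$ enters both the angle medians (hence $\beta_1$) and the intercept median $\med_{i\in S}(y_i-\beta_1 x_i)$, so I must rule out a coordinated deviation that trades a slope change against an intercept change to her net benefit. This coupling, together with the need to coordinate the left/right choices among the three medians when cardinalities are even, is exactly what fails under the weaker hypotheses $S\subseteq S'$ or $S\cap S'=\emptyset$ (the counterexamples above), and it is why the conditions are tightened to $S=S'$, separability, or a singleton. I expect the cleanest resolution to bypass this case analysis entirely by recognizing each admissible $(S,S')$-CRM mechanism as a planar instance of the generalized resistant hyperplane mechanisms introduced next, whose group strategyproofness is established uniformly through the discrete Ham Sandwich Theorem; the direct argument sketched here is the more laborious route.
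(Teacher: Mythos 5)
Your proposal has a genuine gap: neither of the two routes you describe is carried out at the point where the real difficulty lies. The direct route founders exactly on the obstacle you yourself name, the coupling between the angle medians and the intercept median. The ``standard fact'' you invoke to lift per-agent uncompromisingness to coalitions (that a move of one coupled median that benefits one coalition member must move another coupled median adversely for a different member) is not a standard fact; it is precisely the step that fails in the counterexamples to the claims of \citet{PP04} for $S \subseteq S'$ and $S \cap S' = \emptyset$, and nothing in your sketch distinguishes the three corrected conditions from the falsified ones at the level of the argument. There is also a technical flaw in the one case you do argue in detail (singleton $S = \set{i_0}$): the map from the median $\CWA$ to an agent's outcome is \emph{not} globally monotone, because angles in $(0,\pi)$ and in $(\pi,2\pi)$ parametrize lines through the pivot circularly, with the slope running from $-\infty$ to $+\infty$ on each arc separately. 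So the claim that this ``reduces verbatim'' to a one-dimensional median is false as stated; it can be repaired only by observing that the public $x$-coordinates fix how many angles lie in each arc, hence into which arc --- and towards which publicly determined subset of points --- the median must land.

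The route you defer to, recognizing each admissible $(S,S')$-CRM mechanism as a planar instance of the GRH family, is indeed the paper's proof, but naming it is not executing it: the reduction is itself the nontrivial content, and the paper devotes a separate lemma to it (Lemma~\ref{lem:crm-part-rl}). For $S \cap S' = \emptyset$ with $\min(|S|,|S'|)=1$, one must show the median $\CWA$ from the pivot always points into the publicly determined side containing more points, say $R$, and then identify the correct order statistic, making the mechanism a $(\set{i^*},R,1,k')$-GRL mechanism on separable sets. For $S = S'$ the reduction is genuinely surprising: one splits the points into the left half $L$ and right half $R$ by $x$-coordinate and proves, via a counting argument on median clockwise angles, that the CRM line leaves every point of one half with residual of one sign and every point of the other half with residual of the opposite sign, so that the mechanism coincides with $(L,R,k,k')$-GRL for appropriate $k,k'$. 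Only after this reduction does group strategyproofness follow, and even then it requires the comparison-and-counting argument (Lemma~\ref{lem:hyperplane-comparison} and Theorem~\ref{thm:gen-rl-gsp}), for which your ``standard fact'' is not a substitute. As it stands, your proposal establishes none of the three cases.
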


The third condition partially resembles dictatorship as the agent in the singleton set is guaranteed to have zero residual (i.e., be on the regression line). 

\subsubsection{Generalized Resistant Line Mechanisms on a Plane}
\label{sec:gen-rl-plane}
In this section, our goal is to introduce a novel family of group strategyproof mechanisms that include, as special cases, the mechanisms covered in the three cases of Theorem~\ref{thm:crm-corrected}. Our starting point is the family of {\em resistant line} (RL) mechanisms from the statistics literature~\cite{JV85}, which \citet{PP04} showed to be equivalent to the case of separable $S$ and $S'$. 

The standard formulation of the RL mechanism involves three sets $L,M,R \subseteq N$ such that $\max_{i \in L} x_i < \min_{i \in M} x_i$ and $\max_{i \in M} x_i < \min_{i \in R} x_i$, and returns a line $\vbeta = (\beta_1,\beta_0)$ given by
$$
\textstyle\med_{i \in L} y_i - \beta_1 \cdot x_i - \beta_0 = \med_{i \in R} y_i - \beta_1 \cdot x_i - \beta_0 = 0.
$$
That is, the line makes the median residuals in $L$ and $R$ zero. It is known that this equation yields a unique solution~\cite{JV85}. \citet{PP04} showed that this is identical to the $(L,R)$-CRM mechanism. Indeed, separability of $L$ and $R$ makes clockwise angles from points in $L$ to points in $R$ monotonic in (and thus replaceable by) slopes, yielding the following formulation for the $(L,R)$-CRM mechanism.
\begin{align*}
\beta_1 &= \textstyle\med_{i \in L} \med_{j \in R} \frac{y_j-y_i}{x_j-x_i},\\
\beta_0 &= \textstyle\med_{i \in L} y_i - \beta_1 x_i = \med_{j \in R} y_j - \beta_1 x_j.
\end{align*}
The alternative definition of $\beta_0 = \med_{j \in R} (y_j - \beta_1 \cdot x_j)$ follows from the fact that if the line passes through $i^* \in L$, it is directed towards the point in $R$ which is at the median angle or slope, and thus bisects $R$ in addition to bisecting $L$.

Along with Theorem~\ref{thm:crm-corrected}, this observation establishes group strategyproofness of all resistant line mechanisms. Two popular mechanisms from this family are the Brown-Mood mechanism~\cite{BM51}, in which $L$ and $R$ each contain half of the points while $M$ is empty, and the Tukey mechanism~\cite{Tuk71}, in which $L$, $M$, and $R$ each contain a third of the points. 

Our next step is to extend this family. A natural idea is that instead of making the {\em median} residuals from $S$ and $S'$ zero, we make the $k^\th$ smallest residual in $S$ and the $(k')^\th$ smallest residual in $S'$ zero, for fixed $k \in [|S|]$ and $k' \in [|S'|]$.

\begin{definition}[Generalized Resistant Line (GRL) Mechanisms]
	Given separable sets $S,S' \subseteq N$, $k \in [|S|]$, and $k' \in [|S'|]$, the $(S,S',k,k')$-generalized resistant line (GRL) mechanism returns the line $\vbeta=(\beta_1,\beta_0)$ given by
	\begin{equation}\label{eq:gen-rl-plane}
	\textstyle\min_{i \in S}^k y_i -\beta_1 x_i - \beta_0 = \textstyle\min_{j \in S'}^{k'} y_j - \beta_1 x_j - \beta_0 = 0.
	\end{equation}
\end{definition} 

We show that these mechanisms are well defined (i.e., there is a unique solution to Equation~\eqref{eq:gen-rl-plane}), and they are group strategyproof. Once again, we omit the proof because we later introduce an even broader family of mechanisms, for which we prove these results directly. 

\begin{theorem}\label{thm:gen-rl-plane}
	For separable sets $S,S' \subseteq N$, $k \in [|S|]$ and $k' \in [|S'|]$, the $(S,S',k,k')$-generalized resistant line mechanism is well defined and group strategyproof.
\end{theorem}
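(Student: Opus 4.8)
The plan is to treat well-definedness and the incentive guarantee separately, reducing both to the monotonicity of a single scalar function of the slope. Throughout I assume without loss of generality that the separating line puts $S$ on the left, i.e. $\delta \triangleq \min_{j\in S'} x_j - \max_{i\in S} x_i > 0$. For a fixed slope $\beta_1$, the unique intercept making the $k$-th smallest residual in $S$ vanish is $A(\beta_1) \triangleq \min^{k}_{i\in S}(y_i - \beta_1 x_i)$, and likewise $B(\beta_1) \triangleq \min^{k'}_{j \in S'}(y_j - \beta_1 x_j)$ for $S'$. A line solves Equation~\eqref{eq:gen-rl-plane} iff $\beta_0 = A(\beta_1) = B(\beta_1)$, so it suffices to show $g(\beta_1) \triangleq A(\beta_1) - B(\beta_1)$ has a unique zero. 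Each of $A,B$ is a $k$-th order statistic of finitely many affine functions of $\beta_1$, hence continuous and piecewise linear; at any point of differentiability $A'(\beta_1) = -x_{i^*}$ and $B'(\beta_1) = -x_{j^*}$, where $i^* \in S$, $j^* \in S'$ are the indices currently at ranks $k,k'$. Thus $g'(\beta_1) = x_{j^*} - x_{i^*} \ge \delta > 0$ wherever defined, so $g$ is strictly increasing with slope bounded below by $\delta$; being continuous and unbounded in both directions, it has a unique root $\beta_1^*$, after which $\beta_0^* = A(\beta_1^*) = B(\beta_1^*)$ is forced. In particular the output line passes through the actual point of $S$ at rank $k$ and the point of $S'$ at rank $k'$, whose residuals are exactly $0$.

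For strategyproofness I would establish a clip characterization. Fix an agent $a$ and all other reports, and view the outcome $\hy_a = \beta_1 x_a + \beta_0$ as a function of $a$'s report $\ty_a$. If $a \notin S\cup S'$ the line is independent of $\ty_a$, so the claim is trivial; assume $a\in S$ (the case $a\in S'$ is symmetric under reflection). I claim $\hy_a(\ty_a) = \med\{\ell_a,\ty_a,u_a\}$ for an interval $[\ell_a,u_a]$ determined by the other reports. Three facts give this: (i) whenever $a$ sits at rank $k$ in $S$, the line passes through $(x_a,\ty_a)$, so $\hy_a = \ty_a$; (ii) once $\ty_a$ is large (resp. small) enough that $a$'s term leaves the $k$-th order statistic $A$, the function $g$ and hence the entire output line stop depending on $\ty_a$, so $\hy_a$ is constant; (iii) raising $\ty_a$ only raises $A$ pointwise, hence raises $g$, hence weakly lowers $\beta_1^*$ by strict monotonicity in $\beta_1$, and since the line is pinned at an $S'$-pivot lying to the right of $x_a$, a lower slope raises the line at $x_a$, so $\hy_a$ is continuous and nondecreasing in $\ty_a$. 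Together these yield the flat/diagonal/flat shape, i.e. $\hy_a$ is the projection of $\ty_a$ onto $[\ell_a,u_a]$. Under any single-peaked preference with peak $y_a$, reporting $\ty_a = y_a$ returns the point of $[\ell_a,u_a]$ closest to $y_a$, which no misreport can improve, establishing strategyproofness.

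The clip characterization does not immediately give group strategyproofness, because the interval $[\ell_a,u_a]$ of a coalition member depends on the other coalition members' reports: members of $S$ jointly move the anchor $A$, members of $S'$ move $B$, and both sides move the coupling slope $\beta_1^*$. I would argue by contradiction with an extremal manipulator. Given a coalition deviation under which every member is weakly better and some member $a_0$ is strictly better, single-agent strategyproofness forces $a_0$'s gain to come from other members shifting $a_0$'s interval favorably, say raising $u_{a_0}$ for $a_0\in S$. Raising the left end of the pinned line requires pushing the $S$-anchor $A = \min^{k}_{i\in S}(\cdot)$ upward, and because a $k$-th order statistic is resistant, this forces some $S$-member to report strictly above the new anchor; that member then sits at or above the new $S$-pivot, where its outcome equals its own report and so moves strictly away from its peak, contradicting weak improvement for all. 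I expect the main obstacle to be making ``some member must over- or under-report past the pivot, and the pivotal member is thereby strictly hurt'' fully rigorous while simultaneously tracking the motion of both anchors and the coupling slope. This is precisely the bookkeeping that the later, dimension-free proof avoids by invoking the discrete Ham Sandwich theorem, and that argument specializes back to the present planar, two-set case.
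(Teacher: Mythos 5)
Your well-definedness argument is correct and is genuinely different from (and, in the plane, more elementary than) the paper's: the paper proves this theorem only as the $d=1$ specialization of generalized resistant hyperplane mechanisms, getting uniqueness from the Hyperplane Comparison Lemma (Lemma~\ref{lem:hyperplane-comparison}) and existence from a bipartite-graph counting argument (Proposition~\ref{prop:gen-rl-valid}), whereas your scalar function $g(\beta_1) = A(\beta_1) - B(\beta_1)$, piecewise linear with slope bounded below by $\delta > 0$, settles existence and uniqueness in one stroke. Your single-agent clip characterization $\hy_a = \med\set{\ell_a, \ty_a, u_a}$ is also sound in outline; indeed it is exactly the form that the paper's Theorem~\ref{thm:our-char} forces on any strategyproof mechanism.

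The gap is in group strategyproofness, which is the substance of the statement, and your own caveat is warranted: the extremal-manipulator step fails as written. You assert that a coalition member who pushes the anchor $A$ upward ``sits at or above the new $S$-pivot, where its outcome equals its own report,'' but only the rank-$k$ agent has outcome equal to its report --- an agent strictly above the pivot gets the line's value at its $x$-coordinate --- and you never compare that agent's post-deviation outcome to its true peak, so no contradiction is reached. The missing idea is the planar case of the Hyperplane Comparison Lemma: two distinct lines agree at a single $x$-value (or are parallel), so separability forces one of $S$, $S'$ to lie strictly to one side of the crossing, i.e., the truthful line $\vbeta$ and the manipulated line $\tilde{\vbeta}$ are \emph{uniformly} ordered over that whole set, say $\vbeta^T\bvx(i) < \tilde{\vbeta}^T\bvx(i)$ for all $i$ in it. Since the relevant order statistic of residuals in that set is zero under $\vbeta$ with truthful data and under $\tilde{\vbeta}$ with manipulated data, if every manipulator in the set had a strictly positive truthful residual under $\vbeta$, then the at-least-$k$ (or $k'$) non-manipulators with non-positive residuals under $\vbeta$ would all have strictly negative residuals under $\tilde{\vbeta}$, a contradiction; hence some manipulator $i$ satisfies $y_i \le \vbeta^T\bvx(i) < \tilde{\vbeta}^T\bvx(i)$ and is strictly worse off (this is the paper's three-line argument in Theorem~\ref{thm:gen-rl-gsp}). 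That uniform comparison, applied directly to the pair of output lines, is what replaces all the bookkeeping about how $A$, $B$, and $\beta_1^*$ co-move under a joint deviation; without it, your approach does not produce a strictly hurt coalition member.
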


While it is clear that generalized resistant line mechanisms cover the second case of Theorem~\ref{thm:crm-corrected} (i.e., separable $S$ and $S'$), we surprisingly find that they also cover the first case ($S=S'$) and the third case ($S\cap S' = \emptyset$ and $\min(|S|,|S'|) = 1$). That is, Theorem~\ref{thm:gen-rl-plane} strictly generalizes Theorem~\ref{thm:crm-corrected}. The proof of the next result is in Appendix~\ref{appendix:missing-proofs}.

\begin{lemma}\label{lem:crm-part-rl}
The $(S,S')$-CRM mechanism is a generalized resistant line mechanism when 
\begin{enumerate*}
\item $S = S'$,
\item $S$ and $S'$ are separable, or 
\item $S \cap S'=\emptyset$ and $\min(|S|,|S'|)=1$.
\end{enumerate*}
\end{lemma}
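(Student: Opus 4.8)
The plan is to prove the lemma constructively: for each of the three cases I will exhibit an explicit pair of separable sets $L,R \subseteq N$ together with ranks $k \in [|L|]$ and $k' \in [|R|]$, and argue that the $(L,R,k,k')$-generalized resistant line mechanism produces exactly the same line as the $(S,S')$-CRM mechanism on every report $\vy$. Since Theorem~\ref{thm:gen-rl-plane} guarantees that the GRL mechanism has a unique output, it suffices to check that the CRM line satisfies the two defining equations $\min^k_{i \in L}(y_i - \beta_1 x_i - \beta_0) = \min^{k'}_{j \in R}(y_j - \beta_1 x_j - \beta_0) = 0$; uniqueness then forces the two mechanisms to coincide as functions of $\vy$.

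The single structural fact that drives all three cases is the behavior of the clockwise angle measured from a fixed point $p$. Because $\CWA$ depends only on the sign of $x_j - x_p$ and (monotonically) on the slope of the segment, every point lying to the left of $p$ has clockwise angle in $(0,\pi)$, every point to its right has clockwise angle in $(\pi,2\pi)$, and within each of these two groups the clockwise angle is a strictly increasing function of the slope from $p$. The position of a point relative to $p$ is governed solely by the (public, fixed) $x$-coordinates, so the partition of any set into its ``left'' and ``right'' parts, and hence which part contains any given rank in the clockwise-angle order, is independent of the reported $\vy$. Moreover, a median or order statistic of clockwise angles taken inside a single side is nothing but the corresponding order statistic of slopes, which, once we know the line passes through $p$, is an order statistic of residuals. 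This is the bridge from the angular language of CRM to the residual language of GRL.

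Case 2 (separable $S,S'$) is immediate and was already indicated in the main text: separability means every angle from $S$ to $S'$ is computed between points on opposite sides, so the nested medians defining $\DA(S,S')$ are nested medians of slopes, i.e.\ the resistant line, which is the GRL mechanism with $L=S$, $R=S'$ and $k,k'$ the relevant median ranks. For Case 3, say $S=\{a\}$ (the argument for $|S'|=1$ is symmetric, with the roles of $a$ and the singleton target exchanged). Here $\DA(S,S')$ is the single median of $\{\CWA((x_a,y_a),(x_j,y_j))\}_{j \in S'}$, and $\beta_0$ forces the line through $a$, so $a$ always has zero residual. Split $S'$ at $x_a$ into $S'_L$ and $S'_R$. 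By the structural fact the clockwise-angle order lists all of $S'_L$ before all of $S'_R$, so the median rank falls in a fixed side $T \in \{S'_L,S'_R\}$ determined only by $|S'_L|$ and $|S'|$; inside $T$ it is a fixed-rank slope order statistic, which translates into a fixed-rank residual order statistic once we use that the line passes through $a$. Taking $L=\{a\}$, $R=T$ (separable, since $T$ lies entirely on one side of $a$), $k=1$, and $k'$ the induced residual rank, I verify that the CRM line satisfies the GRL equations, and conclude by uniqueness.

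The main obstacle is Case 1 ($S=S'$), where $\DA(S,S)$ is a genuine \emph{double} median, $\med_{i \in S}\med_{j \in S \setminus \{i\}} \CWA((x_i,y_i),(x_j,y_j))$, so the clean ``one layer is trivial'' reduction of Case 3 no longer applies. The plan is first to pin down the left/right tie-breaks of all the even medians in a mutually consistent way (the consistency that the main text already flagged as necessary for incentive compatibility), and then to show that, under these tie-breaks, the selected pivot and the side toward which it points are again determined in a $\vy$-independent manner. Concretely, for each $i$ the inner median points into a side of $i$ fixed by the $x$-coordinates; the delicate step is to prove that the outer median over $i$ selects a line whose two zero-residual points lie on opposite sides of a single $\vy$-independent vertical threshold, so that $S$ splits into separable sets $L$ and $R$ on which the double median collapses to one residual order statistic each. (For instance, when $|S|=3$ and the left median is used throughout, a short slope comparison shows the chosen line always passes through the leftmost point, giving $L=\{\text{leftmost}\}$, $R=S\setminus L$, $k=1$, $k'=|R|$.) Establishing the analogous $\vy$-independence of the split and the ranks for general $|S|$ --- rather than computing a single example --- is the crux of the argument; once the split and ranks are identified, the verification that the CRM line solves the GRL equations and the appeal to uniqueness proceed exactly as in the other cases.
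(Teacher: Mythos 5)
You have the right architecture --- exhibit explicit $(L,R,k,k')$ and let uniqueness of the GRL output (Theorem~\ref{thm:gen-rl-plane}) force the two mechanisms to coincide --- and this is exactly how the paper proceeds. Your Case~2 and Case~3 arguments also match the paper's: separability makes clockwise angles monotone in slopes (recovering the resistant line), and for a singleton $S=\{a\}$ the median clockwise angle from $a$ lands in a side of $a$ that is fixed by the $x$-coordinates and the tie-break convention, where it is a fixed-rank slope, hence residual, order statistic. The problem is Case~1 ($S=S'$), which is the substance of the lemma: you state a plan, verify it only for $|S|=3$, and explicitly defer what you yourself call ``the crux'' --- establishing the $\vy$-independence of the split and of the ranks for general $|S|$. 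A proof whose central step is left open is not a proof, so this is a genuine gap.

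Here is the argument you are missing (it is the paper's). Take $S=N$ with $n$ even, and let $L$ and $R$ be the $n/2$ leftmost and $n/2$ rightmost points --- the split is by count, not by locating a data-dependent threshold. Since every $i\in L$ has strictly more points of $N\setminus\{i\}$ to its right than to its left, its median clockwise angle lies in $[\pi,2\pi]$, and symmetrically it lies in $[0,\pi]$ for every $i\in R$; hence the outer median of the $n$ median clockwise angles is either $\max_{i\in R}$ or $\min_{i\in L}$ of them, depending on the left/right tie-break. Say it is the former. The essential step is then to show that the resulting line leaves every point of one half weakly above it and every point of the other half weakly below it, which is what pins the ranks at the extremes ($1$ and $|R|$, or $|L|$ and $1$, consistent with your $|S|=3$ computation). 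The paper proves this by contradiction via a counting argument: if some point violated this sign pattern, comparing it with the directing point shows it would have strictly fewer points at angles below (or above) its median clockwise angle than the directing point has, yet every point has exactly $(n-2)/2$ points on each side of its median clockwise angle. Nothing playing this role appears in your proposal, and your claim that the verification ``proceeds exactly as in the other cases'' is not correct for Case~1: in Cases~2 and~3 the GRL equations follow from monotonicity of angles in slopes on a single fixed side, whereas in Case~1 they require this global counting argument. (The odd-$n$ case needs one further easy step: the point with the median $x$-coordinate joins $L$ or $R$ according to the half-range of its median clockwise angle, which is fixed by the inner tie-break convention.)
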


\subsubsection{Generalized Resistant Hyperplane Mechanisms in High Dimensions}
\label{sec:gen-rl}

Surprisingly, the statistics literature does not offer an extension of resistant line mechanisms to higher dimensions. In our efforts to do so, we quickly realized that this is a non-trivial task. In two dimensions, a generalized resistant line mechanism takes two subsets of data points separable by a vertical line, and returns the regression line which makes prescribed percentiles of residuals in each set zero. In $d+1$ dimensions (recall that $\vx(i) \in \bbR^d$ and $y_i \in \bbR$), it seems natural to take $d+1$ ``separable'' subsets of data points, and return the regression hyperplane which makes prescribed percentiles of residuals in each set zero. However, the separability condition must now ensure existence of a unique hyperplane with this property, even if we ignore our game-theoretic desiderata. 

In resolving this issue, we make a connection to the literature on the {\em Ham Sandwich Theorem} and its generalizations. Hereinafter, given a hyperplane $H$, we denote by $H^+$ and $H^-$ its positive and negative closed half-spaces, respectively. A basic version of the ham sandwich theorem due to~\citet{ST42} states that given $k$ continuous measures $\mu_1,\ldots,\mu_k$ on $\bbR^k$, there exists a hyperplane $H$ such that $\mu_i(H^+) = 1/2$ for each $i \in [k]$. A discrete version of this result due to~\citet{EH11} states that given $k$ finite sets $S_1,\ldots,S_k \subseteq \bbR^k$, there exists a hyperplane $H$ such that for each $i \in [k]$, $H$ ``bisects'' $S_i$ and $H \cap S_i \neq \emptyset$. Here, we say that a hyperplane $H$ bisects a set of points $S$ if each {\em closed} half-space of $H$ contains at least $\lceil |S|/2 \rceil$ points.

For linear regression, this implies that given $S_1,\ldots,S_{d+1} \subseteq \calD$, there exists a ``resistant hyperplane'' which makes the median residual from $S_t$ zero, for each $t \in [d+1]$. While this seems like a natural generalization of resistant line mechanisms, it is easy to check that such a hyperplane is not always unique, even in two dimensions. Further, if the median is replaced by other percentiles, the existence is no longer guaranteed.\footnote{Recall that even in two dimensions, we needed an additional condition on the sets $S$ and $S'$: separability by a vertical line.} 

\citet{SZ10} provide a generalization that {\em almost} perfectly fits our needs. They show that under certain conditions on $S_1,\ldots,S_{d+1}$, there exists a unique hyperplane $H$ which contains a given number of points from each set in its negative closed half-space. This discrete result builds upon previous continuous variants~\cite{BHJ08,Bre10}. We first define a condition they require, which also plays a key role in our result.

\begin{definition}[Well Separable Sets~\cite{KN73}]
	Given $t  \in [k+1]$, finite sets $S_1, \dots, S_t$ of points in $\bbR^k$ are called {\em well separable} if for all disjoint $I,J \subseteq [t]$, there exists a hyperplane $H$ such that $S_i \subset H^+ \setminus H$ for each $i \in I$ and $S_j \subset H^- \setminus H$ for each $j \in J$, i.e., $H$ separates $\cup_{i \in I} S_i$ from $\cup_{j \in J} S_j$ by putting them in different {\em open} half-spaces.
\end{definition}

Well separable sets are sometimes called {\em affinely independent} sets~\cite{Bre10}. Well separability is equivalent to various other conditions~\cite{Bre10,SZ10}. In what follows, $\conv(\cdot)$ denotes the convex hull.
\begin{proposition}\label{prop:well-separable}
	For $t \in [k+1]$, finite sets $S_1,\ldots,S_t \subset \bbR^k$ are well separable if and only if:
	\begin{enumerate}
		\item For all choices of $(x_i \in \conv(S_i))_{i \in [t]}$, the affine hull of $x_1, \ldots, x_t$ is a $(t-1)$-dimensional flat.
		\item No $(t-2)$-dimensional flat has a nonempty intersection with $\conv(S_i)$ for each $i \in [t]$.
		\item $\conv(S_1),\ldots,\conv(S_t)$ are well separable.
	\end{enumerate}
\end{proposition}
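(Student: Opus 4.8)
The plan is to route everything through convex hulls and a single Radon-type ``sign-splitting'' observation. First I would dispose of condition~(3): since an open half-space is convex, a hyperplane $H$ satisfies $S_i \subset H^+ \setminus H$ if and only if $\conv(S_i) \subset H^+ \setminus H$, so $S_1,\dots,S_t$ are well separable exactly when the polytopes $\conv(S_1),\dots,\conv(S_t)$ are, giving the equivalence with~(3) immediately. Next I would reformulate well separability itself: since each $\conv(\cup_{i \in I} S_i)$ is a compact convex polytope, and two disjoint compact convex sets are strictly separable by a hyperplane into open half-spaces while two intersecting ones are not, well separability is equivalent to the purely combinatorial condition that $\conv(\cup_{i \in I} S_i) \cap \conv(\cup_{j \in J} S_j) = \emptyset$ for every pair of disjoint $I,J \subseteq [t]$. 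It then remains to match this disjointness condition against~(1) and~(2).

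The bridge in both remaining directions is that an affine dependency $\sum_{\ell} \lambda_\ell z_\ell = 0$ with $\sum_\ell \lambda_\ell = 0$ and coefficients not all zero splits, according to the sign of $\lambda_\ell$, into nonempty disjoint index sets $I = \{\ell : \lambda_\ell > 0\}$ and $J = \{\ell : \lambda_\ell < 0\}$ whose common normalized combination is a point of $\conv(\{z_\ell\}_{\ell \in I}) \cap \conv(\{z_\ell\}_{\ell \in J})$. Using this I would first observe that~(1) and~(2) are themselves equivalent: an affinely dependent transversal $x_i \in \conv(S_i)$ lies in a flat of dimension at most $t-2$ meeting all the hulls, and conversely the points cut out of the hulls by a $(t-2)$-flat are $t$ points in a $(t-2)$-flat, hence affinely dependent. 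For the forward implication (disjointness $\Rightarrow$ (1),(2)) I would argue by contraposition: a failure of~(1) yields an affinely dependent transversal, and sign-splitting its dependency produces disjoint $I,J$ with $\conv(\cup_{i\in I}S_i) \cap \conv(\cup_{j\in J}S_j) \neq \emptyset$, contradicting disjointness.

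The converse, namely that disjointness fails whenever~(2) fails so that $(1)\wedge(2)\Rightarrow$ well separable, is the step I expect to be the main obstacle, because the sign-split need not involve all $t$ sets. Concretely, a common point $p \in \conv(\cup_{i\in I}S_i) \cap \conv(\cup_{j\in J}S_j)$ expands into convex combinations $p = \sum_{i \in I'} \mu_i \bar{x}_i = \sum_{j \in J'} \nu_j \bar{y}_j$ supported on subsets $I' \subseteq I$, $J' \subseteq J$, yielding an affinely dependent family $\{\bar{x}_i\}_{i \in I'} \cup \{\bar{y}_j\}_{j \in J'}$ indexed by only $K = I' \cup J'$, possibly a proper subset of $[t]$ of size $m \le t$. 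These $m$ points span a flat of dimension at most $m-2$; I would then adjoin one point from each of the remaining $t - m$ hulls and take the affine hull, raising the dimension by at most $t-m$ and hence producing a flat of dimension at most $t-2$ that meets every $\conv(S_i)$. Since $t \le k+1$ gives $t-2 \le k-1$, this flat can be enlarged to an honest $(t-2)$-flat inside $\bbR^k$ without losing any intersection, contradicting~(2). The care needed is precisely in this dimension bookkeeping and in the enlargement step; with it in hand, well separability and conditions~(1)--(3) are all equivalent.
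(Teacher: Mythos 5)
Your proof is correct; but note that the paper never proves this proposition itself --- it is presented as a known equivalence imported from the literature on generalized ham-sandwich cuts (\cite{Bre10,SZ10}), so there is no internal proof to compare against, and your argument supplies exactly what the paper omits. Your route is the natural elementary one, and its key steps all check out: condition (3) follows since open half-spaces are convex, so $S_i \subset H^+\setminus H$ iff $\conv(S_i) \subset H^+ \setminus H$; the reformulation of well separability as $\conv(\cup_{i \in I} S_i) \cap \conv(\cup_{j\in J} S_j) = \emptyset$ for all disjoint $I,J$ is valid because these hulls are compact convex sets, for which disjointness gives strong (hence open-half-space) separation; the Radon-type sign-splitting correctly turns an affinely dependent transversal into a point witnessing non-disjointness, giving well separability $\Rightarrow$ (1); and you correctly isolate and resolve the real subtlety of the converse, namely that a common point $p$ of the two hulls may be supported on only $m \le t$ of the sets: its $m$ support points are affinely dependent, hence span a flat of dimension at most $m-2$, adjoining one point from each of the other $t-m$ hulls raises the dimension by at most $t-m$, and since $t-2 \le k-1$ the resulting flat can be enlarged to an exact $(t-2)$-flat inside $\bbR^k$ without losing any intersection. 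One presentational remark: the same enlargement step is already needed inside your claim that (1) and (2) are equivalent (an affinely dependent transversal may span a flat of dimension strictly below $t-2$, while (2) speaks of flats of dimension exactly $t-2$), so it is cleaner to state the enlargement argument once, before both uses; with that, your cycle ``well separable $\Leftrightarrow$ disjointness $\Rightarrow$ (1) $\Leftrightarrow$ (2) $\Rightarrow$ disjointness,'' together with (3), closes correctly.
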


\citet{SZ10} impose an additional condition, which we eliminate in our work.

\begin{definition}[Weak General Position]
	Finite sets $S_1, \dots, S_k \subset \bbR^k$ are said to have {\em weak general position} if for every choice of $(x_i \in S_i)_{i \in [k]}$, the affine hull of $x_1,\ldots,x_k$ is a $(k-1)$-dimensional flat which contains no other point of $\cup_{i \in [k]} S_i$.
\end{definition}

\begin{theorem}[\cite{SZ10}]
	If finite sets $S_1,\ldots,S_k \subset \bbR^k$ are well separable and have weak general position, then given any choice of $k_i \in [|S_i|]$ for $i \in [k]$, there exists a unique hyperplane $H$ such that for each $i \in [k]$, $H \cap S_i \neq \emptyset$ and $|H^- \cap S_i| = k_i$. 
\label{thm:ham-sandwich}
\end{theorem}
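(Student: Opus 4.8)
The plan is to prove existence and uniqueness together by induction on $k$, using well separability to drive a one-parameter sweep and weak general position to control its degeneracies. The degree-of-freedom count is the right guide: a hyperplane in $\bbR^k$ has $k$ free parameters, and requiring $H \cap S_i \neq \emptyset$ for each $i \in [k]$ imposes $k$ conditions, so the feasible hyperplanes should form a discrete set, with the prescribed counts $(k_i)$ selecting exactly one. Weak general position guarantees that any feasible $H$ meets each $S_i$ in \emph{exactly} one point (one point from each set spans a genuine $(k-1)$-flat containing no further point of $\cup_i S_i$), so that $|H^- \cap S_i| = k_i$ is equivalent to having exactly $k_i - 1$ points of $S_i$ strictly below $H$ and $|S_i| - k_i$ strictly above. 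The base case $k=1$ is immediate: a ``hyperplane'' is a point $h \in \bbR$, well separability is vacuous, weak general position says the points of $S_1$ are distinct, and the unique feasible $h$ with $|\,(-\infty,h] \cap S_1| = k_1$ is the $k_1$-th smallest element of $S_1$.

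For the inductive step I would peel off one set, say $S_k$. By well separability there is a hyperplane separating $S_k$ from $\cup_{i<k} S_i$, which orients the sweep. Fixing a pivot $p \in S_k$ through which $H$ is required to pass, hyperplanes through $p$ correspond (translating $p$ to the origin and projecting radially) to a $(k-1)$-dimensional instance on the remaining sets $S_1,\dots,S_{k-1}$; the inductive hypothesis then produces, for each pivot $p$, a unique hyperplane $H_p$ realizing the counts $k_1,\dots,k_{k-1}$ on the first $k-1$ sets. I then track the integer-valued quantity $\phi(p) = |H_p^- \cap S_k|$ as $p$ ranges over $S_k$ in the separating order. Weak general position forces $\phi$ to change by exactly $\pm 1$ between consecutive pivots (no two points of a set can cross $H_p$ simultaneously), and the extreme pivots pin the feasible range, so a discrete intermediate-value argument yields a pivot with $\phi(p) = k_k$, giving the desired $H$.

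Uniqueness follows from strict monotonicity of the count functions. Suppose two distinct feasible hyperplanes $H, H'$ realize the same vector $(k_i)$. Parametrizing the pencil of hyperplanes interpolating between $H$ and $H'$ and invoking Proposition~\ref{prop:well-separable}(1) — every selection of one point from each $\conv(S_i)$ spans a genuine $(k-1)$-flat — one shows that along the pencil each per-set count is strictly monotone, so the two endpoints cannot agree on all $k$ counts unless $H = H'$. This is precisely where the full strength of well separability (not merely pairwise separation) is used.

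The main obstacle is the bookkeeping of the inductive step: I must verify that the radial projection from $p$ sends $S_1,\dots,S_{k-1}$ to sets in $\bbR^{k-1}$ that remain well separable and in weak general position, so that the inductive hypothesis genuinely applies and the pivot map $p \mapsto H_p$ is well defined and passes through exactly one point of each of the first $k-1$ sets at every stage. Weak general position is exactly the nondegeneracy that rules out simultaneous crossings during the sweep, while well separability is the property inherited under projection; making these inheritances precise — i.e.\ showing the hypotheses are exactly preserved by the peeling operation, and that the swept count $\phi$ covers the full range $[1,|S_k|]$ — is the crux of the argument and the step where \citet{SZ10} carry out the real work.
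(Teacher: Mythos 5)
Your proposal has genuine gaps at its two load-bearing steps. First, the discrete sweep: you claim that as the pivot $p$ moves through $S_k$ ``in the separating order,'' the count $\phi(p) = |H_p^- \cap S_k|$ changes by exactly $\pm 1$ between consecutive pivots, because weak general position rules out simultaneous crossings. But weak general position only prevents extra points from lying \emph{on} a spanned flat; it gives no continuity whatsoever to the map $p \mapsto H_p$. Consecutive pivots produce two unrelated hyperplanes (each determined by a separate application of the inductive hypothesis), and arbitrarily many points of $S_k$ can switch sides between them, so the discrete intermediate-value argument collapses; nor do you show that the extreme pivots realize the extreme counts $1$ and $|S_k|$. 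Second, uniqueness: you assert that along the pencil between two feasible hyperplanes \emph{each} per-set count is strictly monotone. That is both unproven and stronger than what is true --- what well separability actually yields (via Proposition~\ref{prop:well-separable}) is that for two distinct hyperplanes there exists \emph{at least one} set lying strictly above one and below the other, which already suffices. Finally, you yourself flag that the inheritance of well separability and weak general position under the radial projection is unverified; since the whole induction rests on it, the proposal is a proof outline with its central verifications deferred, not a proof.

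For comparison: the paper does not prove this statement at all --- it is imported from \citet{SZ10} --- but it does prove a relaxed analogue, Proposition~\ref{prop:gen-rl-valid}, which drops weak general position and only requires the $k_t^\th$ smallest residual in each set to vanish, by a completely different and non-inductive route. Uniqueness follows from the Hyperplane Comparison Lemma (Lemma~\ref{lem:hyperplane-comparison}): given two distinct hyperplanes, well separability produces one set whose points all lie strictly between them, contradicting both realizing the same order statistic. Existence is a double-counting argument over two bipartite graphs whose left vertices are count vectors $\vk$ and whose right vertices are traversal hyperplanes: fixing an ordering of each set, every $\vk$ corresponds to exactly one hyperplane through the selected points (using Proposition~\ref{prop:well-separable} again), every traversal hyperplane has equal degree in both graphs, so $|E| = |V|$, and combined with the at-most-one property from uniqueness this forces exactly one feasible hyperplane per $\vk$. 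That counting argument sidesteps precisely the sweep continuity and hypothesis-inheritance issues on which your induction founders, and is worth studying as the cleaner path.
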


This result gives us {\em almost} what we want for linear regression in $\bbR^{d+1}$. Given a family of sets $S_1,\ldots,S_{d+1} \subseteq \calD$ that are well separable and have weak general position, and $k_t \in [|S_t|]$ for $t \in [d+1]$, it ensures the existence of a unique hyperplane which makes the $k_t^\th$ smallest residual in each set $S_t$ zero. However, it falls short of our requirements in two key aspects.
\begin{itemize}
	\item Theorem~\ref{thm:ham-sandwich} allows the assignment of points in $\calD$ to sets $S_1,\ldots,S_{d+1}$ to depend on the private information $\vy$. For strategyproofness, we need this assignment to be based solely on the public information $\vx$. Recall that in two dimensions, we required sets $S$ and $S'$ to be separable by a {\em vertical} line. We choose the $d+1$ sets so that they are well separable in the $d$-dimensional public information space,\footnote{While Theorem~\ref{thm:ham-sandwich} uses $d+1$ well separable sets in $\bbR^{d+1}$, even $\bbR^d$ allows up to $d+1$ well separable sets.} and establish group strategyproofness using a technical lemma, which may be of independent interest.  
	\item While we only want to make the $k_t^\th$ smallest residual in each $S_t$ zero, \citet{SZ10} aim for something stronger: they want the number of points from each $S_t$ in the negative closed halfspace to be exactly $k_t$. This necessitates their weak general position assumption, which we relax.
\end{itemize}

We are now ready to present our results. They closely mirror, but do not make use of, the results of~\citet{SZ10}. We revert to using notation of our linear regression setting. Recall that a hyperplane $\vbeta = (\vbetaone,\beta_0)$ passes through $(\vx(i),\vbeta^T\ \bvx(i))$ for each $i \in N$, where $\bvx(i)=(\vx(i),1)$. %

\begin{definition}
	Given a family $\calS = (S_1,\ldots,S_k)$ of nonempty, pairwise disjoint subsets of $N$, and a set of points $P = (p_i)_{i \in N}$, define the partition function $\calP(P,\calS) = (P_t)_{t \in [k]}$, where $P_t = (p_i)_{i \in S_t}$ for each $t \in [k]$. That is, $\calP(P,\calS)$ partitions the set of points $P$ based on index sets from $\calS$.
\end{definition}

\begin{definition}[Publicly Separable Sets of Agents]
	We say that a family $\calS = (S_1,\ldots,S_{d+1})$ of nonempty, pairwise disjoint subsets of $N$ is {\em publicly separable} if $\calP(\vx,\calS)$ is well separable.
\end{definition}

\begin{definition}[Generalized Resistant Hyperplane (GRH) Mechanisms]
	Given a family $\calS = (S_1,\ldots,S_{d+1})$ of publicly separable sets of agents, and $\vk = (k_1,\ldots,k_{d+1})$ with $k_t \in [|S_t|]$ for $t \in [d+1]$, the $(\calS,\vk)$-generalized resistant hyperplane (GRH) mechanism returns a hyperplane $\vbeta$ such that $\min_{i \in S_t}^{k_t} (r_i \triangleq y_i - \vbeta^T\ \bvx(i)) = 0$ for each $t \in [d+1]$. That is, it makes the $k_t^\th$ smallest residual from every set $S_t \in \calS$ zero.
\end{definition}	

We first need to establish that the GRH mechanisms are well defined, i.e., the hyperplane they seek is guaranteed to exist and be unique. To that end, we prove a useful technical lemma, which may be of independent interest. 

\begin{lemma}[Hyperplane Comparison Lemma]\label{lem:hyperplane-comparison}
	Given a family $\calS = (S_1,\ldots,S_{d+1})$ of publicly separable sets of agents, and two distinct hyperplanes $\vbeta^1$ and $\vbeta^2$ in $\bbR^{d+1}$, there exists a set $S_t \in \calS$ such that either $(\vbeta^1)^T\ \bvx(i) < (\vbeta^2)^T\ \bvx(i)$ for all $i \in S_t$, or $(\vbeta^1)^T\ \bvx(i) > (\vbeta^2)^T\ \bvx(i)$ for all $i \in S_t$. 
\end{lemma}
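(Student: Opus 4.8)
The plan is to reduce the statement to a sign analysis of a single affine function on the $d$-dimensional public-information space, and then to read off the conclusion directly from condition~2 of Proposition~\ref{prop:well-separable}. The geometric heavy lifting is entirely outsourced to that proposition; the lemma itself is essentially a restatement of its second condition.

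First I would introduce the affine function that measures the gap between the two hyperplanes. Write $\vbeta^1 - \vbeta^2 = (\vec{a}, b)$ with $\vec{a} \in \bbR^d$ and $b \in \bbR$, and define $g(\vx) = \vec{a}^T \vx + b$ for $\vx \in \bbR^d$. Then $g(\vx(i)) = (\vbeta^1)^T \bvx(i) - (\vbeta^2)^T \bvx(i)$ for every agent $i$, so the lemma is exactly the claim that $g(\vx(i))$ has a constant strict sign on some $S_t$. Since $\vbeta^1 \neq \vbeta^2$, the pair $(\vec{a},b)$ is nonzero. If $\vec{a} = \vec{0}$ then $b \neq 0$, so $g$ is a nonzero constant and every (nonempty) $S_t$ works.

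The main case is $\vec{a} \neq \vec{0}$, where the zero set $H = \{\vx \in \bbR^d : g(\vx) = 0\}$ is a genuine hyperplane in $\bbR^d$, i.e.\ a $(d-1)$-dimensional flat. I would argue by contradiction: assume no $S_t$ has $g(\vx(i)) > 0$ for all $i \in S_t$ and none has $g(\vx(i)) < 0$ for all $i \in S_t$. Negating, each $S_t$ contains some agent $p$ with $g(\vx(p)) \geq 0$ and some agent $q$ with $g(\vx(q)) \leq 0$; because $g$ is affine, it vanishes at a point of the segment $[\vx(p), \vx(q)] \subseteq \conv(\{\vx(i) : i \in S_t\})$, so $H$ meets $\conv(\{\vx(i): i \in S_t\})$ for every $t \in [d+1]$.

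This is the crux. Since $\calS$ is publicly separable, $\calP(\vx,\calS)$ is well separable in $\bbR^d$, so Proposition~\ref{prop:well-separable} applies with $k = d$ and $t = d+1 = k+1$. Its condition~2 states that no $(t-2) = (d-1)$-dimensional flat has a nonempty intersection with $\conv(\{\vx(i): i \in S_t\})$ for every $t$ simultaneously --- but $H$ is precisely such a flat and does meet all of them, a contradiction. Hence some $S_t$ carries a constant strict sign of $g$, which is the desired set. I expect the only delicate bookkeeping to be (i) lining up the index convention so that the flat in condition~2 is exactly a hyperplane of $\bbR^d$, and (ii) checking that the \emph{weak} inequalities $g \geq 0$ and $g \leq 0$ (which permit points lying on $H$) still force $H$ to intersect the convex hull via the intermediate-value argument along the segment; both are routine once the affine reduction is in place.
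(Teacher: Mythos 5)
Your proof is correct and takes essentially the same route as the paper's: both identify the locus where the two hyperplanes agree as a $(d-1)$-dimensional flat in the public-information space $\bbR^d$ and invoke condition~2 of Proposition~\ref{prop:well-separable} to find a set $S_t$ whose convex hull this flat misses, forcing a constant strict sign on that set. The differences are cosmetic --- you describe the flat algebraically as the zero set of $g = (\vbeta^1-\vbeta^2)^T(\cdot\,,1)$ and argue by contradiction rather than directly, and you explicitly handle the parallel-hyperplane case $\vec{a}=\vec{0}$, which the paper's projection-of-the-intersection argument silently skips.
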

\begin{proof}
	Consider the intersection of the two hyperplanes in $\bbR^{d+1}$, and let $W$ be its projection on $\bbR^d$ (the public information space). Note that $W$ is a $(d-1)$-dimensional hyperplane in $\bbR^d$. Given an {\em open} half-space of $W$ (say $W^+$), let $Z$ be the set of points $\bbR^{d+1}$ whose projection on $\bbR^d$ lies in $W^+$. Then, either $(\vbeta^1)^T\ \bar{\vec{p}} > (\vbeta^2)^T\ \bar{\vec{p}}$ for all $\vec{p} \in Z$, or $(\vbeta^1)^T\ \bar{\vec{p}} < (\vbeta^2)^T\ \bar{\vec{p}}$ for all $\vec{p} \in Z$, where $\bar{\vec{p}} = (\vec{p},1)$.

	Let $\calP(\vx,\calS) = (X_1,\ldots,X_{d+1})$. Because $\calS$ is publicly separable, $X_1,\ldots,X_{d+1}$ are well separable. By Proposition~\ref{prop:well-separable}, no $(d-1)$-dimensional flat has a nonempty intersection with $\conv(X_t)$ for each $t \in [d+1]$. Because $W$ is a $(d-1)$-dimensional flat, there exists $t \in [d+1]$ such that $W$ does not intersect $\conv(X_t)$, i.e., $X_t$ lies entirely in an {\em open} half-space of $W$. Using the previous argument, either $(\vbeta^1)^T\ \bvx(i) < (\vbeta^2)^T\ \bvx(i)$ for all $i \in S_t$, or $(\vbeta^1)^T\ \bvx(i) > (\vbeta^2)^T\ \bvx(i)$ for all $i \in S_t$. 
\end{proof}

\begin{proposition}\label{prop:gen-rl-valid}
	Generalized resistant hyperplane mechanisms are well defined. That is, given a family $\calS = (S_1,\ldots,S_{d+1})$ of publicly separable sets of agents, and $\vk = (k_1,\ldots,k_{d+1})$ with $k_t \in [|S_t|]$ for $t \in [d+1]$, there exists a unique hyperplane $\vbeta$ for which $\min_{i \in S_t}^{k_t} y_i - \vbeta^T\ \bvx(i) = 0$ for each $t \in [d+1]$.
\end{proposition}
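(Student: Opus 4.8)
The plan is to recast the problem as finding a zero of a single map and then to prove that this map is a homeomorphism. Define $\Phi : \bbR^{d+1} \to \bbR^{d+1}$ by $\Phi(\vbeta)_t = \min_{i \in S_t}^{k_t}\bigl(y_i - \vbeta^T\,\bvx(i)\bigr)$ for $t \in [d+1]$. Each coordinate is the $k_t$-th order statistic of a finite family of affine functions of $\vbeta$, hence continuous, so $\Phi$ is continuous. A solution to the proposition is exactly a point $\vbeta$ with $\Phi(\vbeta) = \vec{0}$, and we must show such a point exists and is unique. I would establish both at once by proving that $\Phi$ is a proper continuous injection of $\bbR^{d+1}$ into itself: by invariance of domain such a map is open, by properness its image is closed, and since $\bbR^{d+1}$ is connected the (nonempty, clopen) image is all of $\bbR^{d+1}$, so $\Phi$ is a homeomorphism. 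Surjectivity then yields existence of $\vbeta$ with $\Phi(\vbeta)=\vec{0}$, and injectivity yields uniqueness.

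Injectivity is immediate from the Hyperplane Comparison Lemma (Lemma~\ref{lem:hyperplane-comparison}). Given distinct $\vbeta^1 \neq \vbeta^2$, the lemma produces a set $S_t$ on which, say, $(\vbeta^1)^T \bvx(i) < (\vbeta^2)^T \bvx(i)$ for every $i \in S_t$, so the residuals satisfy $y_i - (\vbeta^1)^T\bvx(i) > y_i - (\vbeta^2)^T \bvx(i)$ pointwise on $S_t$. Since a strict pointwise inequality between two finite families forces the same strict inequality between their $k$-th order statistics (using $\min^{k}(\cdot) = \min_{|J| = k}\max_{j \in J}(\cdot)$ and comparing at the minimizing $J$), we get $\Phi(\vbeta^1)_t > \Phi(\vbeta^2)_t$, hence $\Phi(\vbeta^1) \neq \Phi(\vbeta^2)$.

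The crux of the argument, and the step I expect to be the main obstacle, is properness, i.e.\ $\|\Phi(\vbeta)\| \to \infty$ as $\|\vbeta\| \to \infty$. I would argue by contradiction: if $\Phi$ is not proper there is a sequence $\vbeta^{(m)}$ with $\|\vbeta^{(m)}\| \to \infty$ while $\|\Phi(\vbeta^{(m)})\| \le C$ for some constant $C$. Passing to a subsequence, normalize $\vbeta^{(m)}/\|\vbeta^{(m)}\| \to \vec{u}$, a unit vector, and write $\vec{u} = (\vec{u}_1, u_0)$. For each $t$, the bound $|\Phi(\vbeta^{(m)})_t| \le C$ translates, via the order-statistic characterization, into the statement that at least $k_t$ points $i \in S_t$ satisfy $(\vbeta^{(m)})^T \bvx(i) \ge y_i - C$ and at least $|S_t| - k_t + 1$ points satisfy $(\vbeta^{(m)})^T \bvx(i) \le y_i + C$. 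Dividing by $\|\vbeta^{(m)}\|$ and taking limits along a further subsequence on which these two index subsets are held fixed, I obtain at least $k_t$ points of $S_t$ with $\vec{u}^T \bvx(i) \ge 0$ and at least $|S_t| - k_t + 1$ points with $\vec{u}^T \bvx(i) \le 0$. Since $k_t \ge 1$ and $|S_t|-k_t+1 \ge 1$, the set $S_t$ then has points weakly on both sides of the affine flat $H_{\vec{u}} = \{\vec{z} \in \bbR^d : \vec{u}_1^T \vec{z} + u_0 = 0\}$, and therefore $H_{\vec{u}} \cap \conv(X_t) \neq \emptyset$, where $\calP(\vx,\calS) = (X_1, \ldots, X_{d+1})$. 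This half-space counting is what makes the limiting argument robust to ties in the values $\vec{u}^T\bvx(i)$.

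It remains to derive the contradiction. The direction $\vec{u}_1 = \vec{0}$ is impossible: then $\vec{u}^T \bvx(i) = u_0 = \pm 1$ has a single nonzero sign across all $i$, contradicting that each $S_t$ has points weakly on both sides of $H_{\vec{u}}$. Hence $\vec{u}_1 \neq \vec{0}$, so $H_{\vec{u}}$ is a genuine $(d-1)$-dimensional flat in $\bbR^d$ meeting $\conv(X_t)$ for every $t \in [d+1]$. But public separability makes $X_1, \ldots, X_{d+1}$ well separable, and Proposition~\ref{prop:well-separable}(2) asserts that no $(d-1)$-dimensional flat meets all of $\conv(X_1), \ldots, \conv(X_{d+1})$ — a contradiction. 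Thus $\Phi$ is proper, which completes the homeomorphism argument. The substantive content is concentrated entirely in the properness step, where well separability enters precisely through Proposition~\ref{prop:well-separable}(2); the injectivity and the topological wrap-up (invariance of domain plus closedness of proper maps) are comparatively routine.
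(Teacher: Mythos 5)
Your proof is correct, and while your injectivity step coincides with the paper's uniqueness argument (both are immediate consequences of the Hyperplane Comparison Lemma, Lemma~\ref{lem:hyperplane-comparison}), your existence argument is genuinely different from the paper's. The paper proves existence by a finite double-counting argument: it builds two bipartite graphs between rank vectors $\vk$ and ``traversal'' hyperplanes (those passing through at least one point of each $S_t$), shows via well separability (Proposition~\ref{prop:well-separable}) that every tuple of chosen points determines exactly one hyperplane so that $|E'|=|V'|$, observes that every traversal hyperplane has equal degree in both graphs so that $|E|=|E'|=|V|$, and then combines this with uniqueness to conclude each rank vector has exactly one matching hyperplane. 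You instead encode the problem as finding a zero of the order-statistic map $\Phi$, and prove $\Phi$ is a continuous proper injection, hence a surjection by invariance of domain plus closedness of proper maps plus connectedness; well separability enters through Proposition~\ref{prop:well-separable} in the properness step, via the normalized limit direction $\vec{u}$ whose associated flat would otherwise meet every $\conv(X_t)$ (and your half-space counting correctly handles ties, and the $\vec{u}_1=\vec{0}$ degenerate direction is correctly excluded). The trade-offs: the paper's proof is elementary and purely combinatorial, needing nothing beyond linear algebra, and it exhibits the hyperplane as a traversal hyperplane explicitly; your proof invokes heavier machinery (Brouwer's invariance of domain) but yields a strictly stronger conclusion --- $\Phi$ is a homeomorphism, so for \emph{every} target vector $\vec{c}\in\bbR^{d+1}$ there is a unique hyperplane making the $k_t^\th$ smallest residual in $S_t$ equal to $c_t$, with continuous dependence on the target --- and it isolates exactly where well separability is needed, in a form that would extend more readily to continuous or perturbed versions of the problem.
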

\begin{proof}
	First, we show that {\em if} such a hyperplane exists, it must be unique. Suppose for contradiction that there are two distinct hyperplanes $\vbeta^1$ and $\vbeta^2$ which make the $k_t^\th$ smallest residual from every $S_t \in \calS$ zero. By the hyperplane comparison lemma (Lemma~\ref{lem:hyperplane-comparison}), there exists $S_t \in \calS$ such that either $(\vbeta^1)^T\ \bvx(i) < (\vbeta^2)^T\ \bvx(i)$ for all $i \in S_t$, or $(\vbeta^1)^T\ \bvx(i) > (\vbeta^2)^T\ \bvx(i)$ for all $i \in S_t$. Without loss of generality, suppose it is the former. Then, at least $k_t$ points in $S_t$ which have a non-positive residual under $\vbeta^2$ have a negative residual under $\vbeta^1$, contradicting the fact that $\vbeta^1$ makes the $k_t^\th$ smallest residual from $S_t$ zero.
	
	For proving existence, we use a counting technique. Create two bipartite graphs $G = (V \cup W, E)$ and $G' = (V' \cup W, E')$. Let $V$ (resp. $V'$) contain a vertex $v_{\vk}$ (resp. $v'_{\vk}$) corresponding to each $\vk = (k_1,\ldots,k_{d+1})$ such that $k_t \in [|S_t|]$ for each $t \in [d+1]$. Thus, $|V| = |V'| = \prod_{t=1}^{d+1} |S_t|$. Let $W$ contain a vertex $w_{\vbeta}$ corresponding to every {\em traversal} hyperplane $\vbeta$, i.e., every hyperplane that passes through at least one point from each set $S_t \in \calS$. 
	
	In graph $G$, we draw an edge between $v_{\vk}$ and $w_{\vbeta}$ if $\vbeta$ makes the $k_t^\th$ smallest residual zero in each $S_t \in \calS$. For constructing graph $G'$, we fix an arbitrary ordering of points in each set, so that we can write $S_t = \set{i^t_1,\ldots,i^t_{|S_t|}}$. Then, we draw an edge in $G'$ between $v'_{\vk}$ and $w_{\vbeta}$ if $\vbeta$ passes through point $i^t_{k_t}$ for each $t \in [d+1]$. %
	 	
	Our goal is to show that each vertex $v_{\vk} \in V$ has exactly one incident edge in graph $G$. We prove this through a sequence of claims. First, we argue that each vertex $v'_{\vk} \in V'$ has exactly one incident edge in graph $G'$. The fact that it has {\em at least} one incident edge follows from the fact that any set of $d+1$ points in $\bbR^{d+1}$ (in particular, $T = \set{i^t_{k_t}}_{t \in [d+1]}$) lie on a hyperplane. If $v'_{\vk}$ has two or more incident edges, then there exist two distinct hyperplanes $\vbeta^1$ and $\vbeta^2$ which pass through all points in $T$. Then, their intersection $\vbeta^*$, which is a $(d-1)$-dimensional flat in $\bbR^{d+1}$, must also pass through all points in $T$. Let $\calP(\vx,\calS) = (X_1,\ldots,X_{d+1})$. Then, the projection of $\vbeta^*$ on the public information space $\bbR^d$ is a $(d-1)$-dimensional hyperplane in $\bbR^d$ which intersects each $X_t$ (and thus each $\conv(X_t)$). However, $\calS$ is a publicly separable family, i.e., $X_1,\ldots,X_{d+1}$ are well separable in $\bbR^d$. This violates the first condition of Proposition~\ref{prop:well-separable}. 
	
	Since each vertex in $V'$ has exactly one incident edge, we have $|E'| = |V'| = \prod_{t=1}^{d+1} |S_t|$. We next argue that $|E| = |E'|$. Take a vertex $w_{\vbeta} \in W$. Note that if hyperplane $\vbeta$ passes through $a_t$ points from each $S_t \in \calS$, then it has degree $\prod_{t=1}^{d+1} a_t$ in both $G$ and $G'$. Since each vertex in $W$ has the same degree in both graphs, we have $|E| = |E'| = |V'| = |V|$. 
	
	Finally, we already established that if there is a hyperplane which makes the $k_t^\th$ smallest residual in each $S_t$ zero, then it must be unique. Thus, each vertex in $V$ has {\em at most} one incident edge in $G$. Together with $|E| = |V|$, this implies that each vertex in $V$ has {\em exactly} one incident edge in $G$.
\end{proof}

We are now ready to present our main contribution. 

\begin{theorem}\label{thm:gen-rl-gsp}
	Every generalized resistant hyperplane mechanism is group strategyproof.
\end{theorem}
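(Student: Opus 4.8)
The plan is to assume toward a contradiction that some coalition $S$ has a beneficial joint deviation, and then to exhibit a single coalition member who is strictly worse off under \emph{every} single-peaked preference; this alone refutes the deviation, since it means not all members weakly benefit. Fix $S$, the members' true values $(y_i)_{i \in S}$, and the (arbitrary but fixed) reports of the agents outside $S$. Let $\vbeta$ and $\tilde\vbeta$ be the hyperplanes returned on the truthful and the deviated profiles, with outcomes $\hy_i = \vbeta^T\bvx(i)$ and $\hy_i' = \tilde\vbeta^T\bvx(i)$ and true residuals $r_i = y_i - \hy_i$. If $\vbeta = \tilde\vbeta$ then every outcome is unchanged and nobody strictly gains, so I may assume $\vbeta \ne \tilde\vbeta$. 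The decisive observation about single-peaked preferences is this: if some $i \in S$ has $\hy_i' \ne \hy_i$ together with $r_i\,(\hy_i' - \hy_i) \le 0$, then $\hy_i$ lies weakly between the peak $y_i$ and $\hy_i'$ while differing from $\hy_i'$, so $\hy_i \stpref_i \hy_i'$ for every single-peaked preference. Such an agent does not weakly prefer the deviation, and the manipulation collapses. Thus it suffices to produce one coalition member satisfying this condition.

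To locate her, I would invoke the Hyperplane Comparison Lemma (Lemma~\ref{lem:hyperplane-comparison}) on the distinct hyperplanes $\vbeta$ and $\tilde\vbeta$: it yields a set $S_t \in \calS$ on which the outcome moves strictly in one direction, say $\hy_i' > \hy_i$ for all $i \in S_t$ (the opposite direction is symmetric). The core claim is then that $S_t$ must contain a coalition member $i$ with $r_i \le 0$; for such an $i$ we have $\hy_i' - \hy_i > 0$ and $r_i \le 0$, hence $r_i(\hy_i' - \hy_i) \le 0$ with $\hy_i' \ne \hy_i$, which is exactly the harmed-member condition of the first paragraph. This claim is the heart of the proof, and I would establish it by contradiction, combining the two defining equations of the GRH mechanism with the monotonicity of the order statistic $\min^{k_t}$ (if every coordinate of a tuple strictly decreases, so does its $k_t$-th smallest value).

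Concretely, suppose no coalition member of $S_t$ has $r_i \le 0$, i.e.\ every coalition member in $S_t$ reports a true value with $y_i > \hy_i$. The defining equation $\min^{k_t}_{i \in S_t}\!\big(\text{(reported value)}_i - \hy_i\big) = 0$ for $\vbeta$ forces at least $k_t$ agents of $S_t$ to have reported residual $\le 0$. By the supposition none of these can be coalition members, since a coalition member's truthful residual is $r_i > 0$; hence they all lie outside $S$ and therefore report identically on both profiles. For each such agent, its residual under $\tilde\vbeta$ equals its (unchanged) reported value minus $\hy_i'$, which is its residual under $\vbeta$ minus the strictly positive shift $\hy_i' - \hy_i$, and is therefore strictly negative. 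This produces at least $k_t$ strictly negative residuals in $S_t$ under $\tilde\vbeta$, forcing $\min^{k_t}_{i \in S_t}\!\big(\text{(reported value)}_i - \hy_i'\big) < 0$ and contradicting the defining equation for $\tilde\vbeta$.

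The main obstacle is precisely this counting step: one must carefully verify that the agents pinned down by $\vbeta$'s defining equation necessarily fall outside the coalition and thus retain their reports, so that the uniform upward shift on $S_t$ is what breaks $\tilde\vbeta$'s defining equation. Once that is secured, the single-peakedness reduction of the first paragraph closes the argument immediately and, notably, uniformly in every dimension, with the comparison lemma supplying the only geometric input.
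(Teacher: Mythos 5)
Your proof is correct and takes essentially the same approach as the paper's: apply the Hyperplane Comparison Lemma to obtain a set $S_t$ on which the outcome shifts strictly in one direction, then argue by counting that if no coalition member in $S_t$ had a non-positive true residual, at least $k_t$ non-manipulators (whose reports are unchanged) would have strictly negative residuals under $\tilde{\vbeta}$, contradicting its defining equation. The harmed agent you exhibit --- a coalition member with $r_i \le 0$ whose outcome is pushed strictly further from her peak --- is exactly the one identified in the paper's proof.
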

\begin{proof}
	Consider an $(\calS,\vk)$-generalized resistant hyperplane mechanism. Consider a set of data points $\calD = (\vx(i),y_i)_{i \in N}$. Suppose a coalition $S \subseteq N$ of agents changes their report to $(\ty_i)_{i \in S}$, and changes the resulting hyperplane from $\vbeta$ to $\tilde{\vbeta}$. Set $\ty_i = y_i$ for $i \in N\setminus S$, and let $\tilde{\calD} = (\vx(i),\ty_i)_{i \in N}$. 
		
	By the hyperplane comparison lemma (Lemma~\ref{lem:hyperplane-comparison}), there exists $S_t \in \calS$ such that either $\vbeta^T\ \bvx(i) < \tilde{\vbeta}^T\ \bvx(i)$ for all $i \in S_t$, or $\vbeta^T\ \bvx(i) > \tilde{\vbeta}^T\ \bvx(i)$ for all $i \in S_t$. 
	
	Without loss of generality, suppose it is the former. The $k_t^\th$ smallest residual from $S_t$ is zero under $\vbeta$ in $\calD$, and under $\tilde{\vbeta}$ in $\tilde{\calD}$. If $S \cap S_t = \emptyset$, or if every manipulator in $S \cap S_t$ has a positive residual under $\vbeta$ in $\calD$, then at least $k_t$ non-manipulators in $N\setminus S$ have a non-positive residual under $\vbeta$ in $\calD$, and thus a strictly negative residual under $\tilde{\vbeta}$ in $\tilde{\calD}$, which contradicts the fact that $\tilde{\vbeta}$ makes the $k_t^\th$ smallest residual in $S_t$ zero in $\tilde{\calD}$. 
	
	In other words, there must exist a manipulator $i \in S \cap S_t$ who has a non-positive residual under $\vbeta$ in $\calD$. Thus, $\tilde{\vbeta}^T\ \bvx(i) > \vbeta^T\ \bvx(i) \ge y_i$, implying that the manipulator is strictly worse off after the manipulation. Hence, the mechanism is group strategyproof.
\end{proof}

For two dimensions ($d=1$), we already argued that our sub-family of group strategyproof CRM mechanisms given by Theorem~\ref{thm:crm-corrected} is part of the larger family of GRL mechanisms (Lemma~\ref{lem:crm-part-rl}). It is easy to see that GRL mechanisms are precisely GRH mechanisms in two dimensions. Indeed, GRH mechanisms would require two subsets of agents $S_1,S_2$ that are publicly separable, i.e., well separable on the $x$-axis. Note that this coincides with the separability definition used by GRL mechanisms (Definition~\ref{def:sep}). Hence, the $(S,S',k,k')$-GRL mechanism is precisely the $(\calS,\vk)$-GRH mechanism with $\calS = (S,S')$ and $\vk = (k,k')$. In three or more dimensions, we do not know if, given $\vx$, one can always construct a family $\calS$ of publicly separable sets of agents such that each set $S_t \in \calS$ contains at least a constant fraction of the agents.

\subsection{Strategyproofness vs Group Strategyproofness}
\label{sec:gsp}

In the single dimensional setting ($d=0$), \citet{Moul80} proved that all strategyproof mechanisms are also group strategyproof. This alternatively follows from a result by~\citet{BBM10}, who gave a sufficient condition on the underlying domain for the sets of strategyproof and group strategyproof mechanisms to coincide. 

Interestingly, all known strategyproof mechanisms for the multidimensional linear regression setting (including generalized $\erm$ and generalized resistant hyperplane mechanisms) are group strategyproof as well. However, it is easy to check that the linear regression setting does not satisfy the sufficient condition of~\citet{BBM10}. Is it still true that all strategyproof mechanisms for linear regression are also group strategyproof? We answer this question {\em negatively}. 

\begin{example}\label{ex:sp-not-gsp} 
	Consider the simple linear regression setting ($d=1$) with $n=2$ agents. Fix the public information $\vx = (x_1,x_2) \in \bbR^2$, and consider the mechanism $M$ that, on input $\vy = (y_1,y_2)$, returns the line passing through points $(x_1,y_2)$ and  $(x_2,y_1)$. Under this mechanism, the outcome for each agent is independent of the agent's report: indeed, the outcome for agent $1$ (resp. agent $2$) is $\hy_1 = y_2$ (resp. $\hy_2 = y_1$). Hence, the mechanism is clearly strategyproof. However, group strategyproofness is violated because when $y_1 \neq y_2$, the two agents can collude, and report $\tvy = (y_2,y_1)$. This makes the resulting line pass through both agents, making both strictly better off.
\end{example}

The requirement that the outcome for each agent be independent of the agent's report, called {\em impartiality} in mechanism design, is stricter than (i.e., logically implies) strategyproofness, %
and has been studied for aggregating opinions or dividing rewards~\cite{CMT08,HM13,TO14,FK15,KLMP15}.

\begin{definition}[Impartial Mechanisms]
	A mechanism $M$ is called \emph{impartial} if the outcome for each agent is independent of the agent's report. Formally, for every agent $i \in N$, reports $\vy$, and alternative report $y'_i$ by agent $i$, we require that $\hy_i(M(\vy)) = \hy_i(M(y'_i,\vy_{-i}))$. 
\end{definition}

In linear regression, when the number of agents is $n=d+1$, we can easily characterize all impartial mechanisms because we can set $\hy_i$ to be an arbitrary function of $\vy_{-i}$, and return a hyperplane passing through the resulting $d+1$ points $(\vx(i),\hy_i)_{i \in N}$. 

\begin{proposition}
	For $n=d+1$, mechanism $M$ is impartial if and only if there exist functions $f_1,\ldots,f_n : \bbR^{n-1} \to \bbR$ such that given $\vy$, $M$ returns a hyperplane passing through $(\vx(i),f_i(\vy_{-i}))_{i \in N}$.
\label{prop:impartial-few-agents}
\end{proposition}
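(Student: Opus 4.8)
The plan is to prove the two directions of the biconditional separately, both of which reduce to a single elementary observation: for any hyperplane $\vbeta = (\vbetaone,\beta_0)$ that a mechanism may return, the outcome for agent $i$ is by definition its height above $\vx(i)$, namely $\hy_i = \vbeta^T \bvx(i)$, and this is exactly the statement that the lifted point $(\vx(i),\hy_i)$ lies on $\vbeta$. Recognizing this bridge between the analytic condition ``$\hy_i$ depends on $\vy_{-i}$ alone'' and the geometric condition ``$\vbeta$ passes through $(\vx(i),f_i(\vy_{-i}))$'' is the only step carrying content; after that, both implications follow by unwinding definitions.

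For the forward direction I would assume $M$ is impartial. By definition $\hy_i(M(\vy))$ does not depend on $y_i$, so I can define $f_i : \bbR^{n-1} \to \bbR$ by $f_i(\vy_{-i}) \triangleq \hy_i(M(\vy))$; impartiality is precisely what makes this well defined, i.e.\ independent of the discarded coordinate $y_i$. Since the returned hyperplane $\vbeta = M(\vy)$ satisfies $\vbeta^T \bvx(i) = \hy_i = f_i(\vy_{-i})$, the point $(\vx(i),f_i(\vy_{-i}))$ lies on $\vbeta$ for every $i \in N$, which is exactly the claimed form.

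For the reverse direction I would assume that functions $f_1,\ldots,f_n$ exist with $M$ returning a hyperplane through $(\vx(i),f_i(\vy_{-i}))_{i\in N}$. Because this hyperplane is the graph of $x \mapsto \vbetaone^T x + \beta_0$, passing through $(\vx(i),f_i(\vy_{-i}))$ forces its height above $\vx(i)$ to equal $f_i(\vy_{-i})$, that is, $\hy_i(M(\vy)) = f_i(\vy_{-i})$. Since the right-hand side does not involve $y_i$, replacing agent $i$'s report $y_i$ by any $y_i'$ leaves $\hy_i$ unchanged, so $M$ is impartial.

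I do not anticipate a genuine technical obstacle here, since the whole argument is definition-chasing; the one point worth flagging explicitly is the role of the hypothesis $n = d+1$. The biconditional itself goes through as above, but $n = d+1$ is what makes the characterized class rich and the functions $f_i$ genuinely free rather than subject to consistency constraints: when the $d+1$ public vectors $\vx(1),\ldots,\vx(n)$ are affinely independent, the $(d+1)\times(d+1)$ matrix with rows $\bvx(i) = (\vx(i),1)$ is invertible, so any prescribed target values $(f_i(\vy_{-i}))_{i\in N}$ are realized by a \emph{unique} hyperplane. I would add a sentence making this explicit, so that it is clear why one may choose the $f_i$ arbitrarily and still obtain a valid (and, by the proposition, impartial) mechanism.
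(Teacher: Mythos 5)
Your proof is correct and follows essentially the same route as the paper, which states this proposition without a separate proof and justifies it by exactly your observation: the outcome $\hy_i = \vbeta^T \bvx(i)$ is by definition the height of the returned hyperplane above $\vx(i)$, so impartiality of $M$ is equivalent to each lifted point $(\vx(i),\hy_i)$ depending on $\vy_{-i}$ alone. Your closing remark on affine independence of the points (which makes the $f_i$ freely choosable when $n=d+1$, since the linear system for the hyperplane is then uniquely solvable) is a sensible clarification that matches the paper's surrounding discussion of why the case $n=d+1$ is easy while $n>d+1$ imposes consistency constraints on the $f_i$.
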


Note that functions $f_i$ can even be discontinuous, which can make the regression hyperplane discontinuous in the input $\vy$. However, we later show (Theorem~\ref{thm:our-char}) that under any strategyproof mechanism, the outcome $\hy_i$ for agent $i$ must be a continuous function of $y_i$ (it is a constant function of $y_i$ in case of impartial mechanisms). 

With $n > d+1$ points, the question of whether impartial mechanisms even exist is non-trivial. While we still need to set each $\hy_i$ as a function of $\vy_{-i}$, it cannot be done arbitrarily as the resulting points $(\vx(i),\hy_i)_{i \in N}$ may no longer lie on a hyperplane. In other words, setting $\hy_i$ as a function of $\vy_{-i}$ for $d+1$ agents already determines the hyperplane, and thus $\hy_j$ for all remaining agents $j$. The mechanism must ensure that these $\hy_j$ are also independent of $y_j$. At first glance, this may seem impossible, except in the trivial case where a constant hyperplane is returned regardless of $\vy$. 

Nonetheless, we show that there exists a wide family of non-trivial impartial mechanisms for linear regression. Our family provides a full characterization of impartial mechanisms for $d=1$ (i.e., for simple linear regression). In the result below, we use the notation $\lbr \vec{a},\vec{b} \rbr$ instead of $\vec{a}^T \vec{b}$ for the sake of simplicity. Its proof is in Appendix~\ref{appendix:missing-proofs}.

\begin{theorem}\label{thm:impartial}
	Given $\vx$, mechanism $M^{\vx}$ for linear regression is impartial if there exist functions $\set{g_i^{\vx} : \bbR \to \bbR^d}_{i \in N}$ and constant $c^{\vx} \in \bbR$ such that for all $\vy$, we have $M^{\vx}(\vy) = \vbeta = (\vbetaone,\beta_0)$, where
	\begin{equation}
	\vbetaone = \textstyle\sum_{i \in N} g_i^{\vx}(y_i), \quad \beta_0 = c^{\vx}-\textstyle\sum_{i \in N} \lbr g_i^{\vx}(y_i),\vx(i) \rbr.
	\label{eq:impartial-hyperplane}
	\end{equation}
	For $d=1$ and an admissible set of points, this characterizes all impartial mechanisms.
\end{theorem}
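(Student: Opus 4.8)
The plan is to treat the two assertions separately: the \emph{sufficiency} of the displayed form for arbitrary $d$, and the claim that for $d=1$ with admissible points this form \emph{characterizes} all impartial mechanisms.

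Sufficiency is a direct substitution. Plugging the proposed $\vbeta$ into the outcome $\hy_i = \lbr \vbetaone, \vx(i) \rbr + \beta_0$ and expanding gives $\hy_i = c^{\vx} + \sum_{j \in N} \lbr g_j^{\vx}(y_j), \vx(i) - \vx(j) \rbr$, where the $j = i$ summand vanishes identically. Hence $\hy_i$ depends only on $\vy_{-i}$, which is exactly impartiality; this direction needs no further work.

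For the characterization ($d=1$) I would write the returned line through its slope $\beta_1 = a(\vy)$ and intercept $\beta_0 = b(\vy)$, so that $\hy_i = a(\vy)\, x_i + b(\vy)$. Impartiality says each $\hy_i$ is independent of $y_i$, i.e.\ the point $(x_i, \hy_i)$ is frozen as $y_i$ varies, so perturbing $y_i$ alone makes the line \emph{pivot} about this fixed point. The engine of the argument is this pivot fact combined with collinearity: for any $k \neq i$ we have $\hy_k = \hy_i + a(x_k - x_i)$, and since $\hy_i$ does not move when only $y_i$ changes, the induced slope change obeys $\Delta \hy_k = (x_k - x_i)\,\Delta a$. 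Admissibility gives $x_k \neq x_i$, so $\Delta a = \Delta \hy_k / (x_k - x_i)$; because $\hy_k$ is a function of $\vy_{-k}$, this expression cannot depend on $y_k$, and ranging over all $k \neq i$ shows that the slope change produced by moving $y_i$ depends on $y_i$ alone. I would then \emph{define} $g_i^{\vx}(y_i)$ to be this marginal slope change relative to a fixed baseline profile $\vy^0$, and telescoping the coordinate changes one at a time yields $\beta_1 = \sum_{i \in N} g_i^{\vx}(y_i)$ once the constant $a(\vy^0)$ is folded into a single $g_i$.

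For the intercept I would show that $b(\vy) + \sum_{i \in N} g_i^{\vx}(y_i)\, x_i$ is a constant $c^{\vx}$. Writing $b = \hy_1 - a\, x_1$ with $\hy_1$ frozen in $y_1$ makes it visibly independent of $y_1$; for any other $j$ I reuse the pivot identity $\Delta \hy_1 = (x_1 - x_j)\,\Delta a = (x_1 - x_j)\,\Delta g_j$, which exactly offsets the contribution $\Delta g_j\,(x_j - x_1)$ from the sum, so the quantity is independent of every $y_j$ and hence constant. This delivers $\beta_0 = c^{\vx} - \sum_{i \in N} g_i^{\vx}(y_i)\, x_i$, matching the claimed form. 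The one genuinely nontrivial step — the main obstacle — is proving that $g_i^{\vx}$ is well defined, i.e.\ that the slope is additively separable across the agents' reports; this is exactly where impartiality and collinearity must be used together through the pivot observation, as neither property alone suffices.
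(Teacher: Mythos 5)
Your proof is correct and takes essentially the same route as the paper's: the sufficiency direction is the identical substitution, and your pivot identity $\Delta a = \Delta \hy_k/(x_k - x_i)$ is precisely the paper's computation of $\beta_1(\vy)-\beta_1(\tvy)$ from two collinear impartial outcomes (using admissibility for $x_k \neq x_i$), which establishes additive separability of the slope, followed by the same coordinate-wise constancy argument showing $\beta_0 + \sum_{i \in N} g_i^{\vx}(y_i)\,x_i$ is constant. The only cosmetic difference is that you re-derive the separability criterion explicitly by telescoping from a baseline profile, whereas the paper invokes it as a known fact about completely additively separable functions.
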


Impartial mechanisms are not compelling from a statistical viewpoint. For instance, in the standard two-dimensional stochastic model where the data points are assumed to be generated by taking points on an underlying line and introducing i.i.d. errors in the dependent variables, it is easy to show that no impartial mechanism can produce an unbiased estimator of the underlying line. Nonetheless, impartial mechanisms help us establish the existence of a rather wide family of strategyproof mechanisms that are {\em not} group strategyproof. In fact, the next result shows that almost all impartial mechanisms violate group strategyproofness; its proof is in Appendix~\ref{appendix:missing-proofs}.

\begin{proposition}\label{prop:impartial-not-gsp}
	For simple linear regression ($d=1$) with an admissible set of points, an impartial mechanism is group strategyproof if and only if it is a constant function (i.e., it returns a fixed regression line regardless of its input).
\end{proposition}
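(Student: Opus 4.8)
The plan is to prove both directions, with the reverse implication (group strategyproof $\Rightarrow$ constant) being the substantive one. The forward direction is immediate: a constant mechanism returns a fixed line for every input, so no report, individual or coalitional, can change any agent's outcome; hence no coalition can make one of its members strictly better off, and group strategyproofness holds vacuously.

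For the hard direction I would argue by contraposition: assuming the impartial mechanism is \emph{not} constant, I produce a two-agent coalition violating GSP. Since $d=1$ and the points are admissible, the ``only if'' half of Theorem~\ref{thm:impartial} says the mechanism has the form \eqref{eq:impartial-hyperplane} with scalar functions $g_i : \bbR \to \bbR$ and a constant $c$; in scalar notation $\beta_1 = \sum_k g_k(y_k)$ and $\beta_0 = c - \sum_k g_k(y_k) x_k$. Fixing all but one coordinate shows the line is constant exactly when every $g_k$ is constant, so non-constancy yields an agent $i$ and two values $a,b$ with $g_i(a)\neq g_i(b)$. Next I compute, for any other agent $j$, the outcome
$$
\hy_j = \beta_1 x_j + \beta_0 = c + \textstyle\sum_{k \neq j} g_k(y_k)(x_j - x_k),
$$
which reconfirms impartiality ($\hy_j$ is free of $y_j$) and exhibits the dependence of $\hy_j$ on $g_i(y_i)$ through the coefficient $x_j - x_i$. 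Admissibility gives $x_j - x_i \neq 0$, so switching agent $i$'s report between $a$ and $b$ moves $\hy_j$ by $(g_i(a)-g_i(b))(x_j-x_i) \neq 0$; denote the two resulting outcomes $y_j^a \neq y_j^b$.

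The construction then takes the coalition $S=\{i,j\}$ with arbitrary fixed truthful reports for the remaining agents, sets agent $i$'s true value to $b$, and sets agent $j$'s peak to be exactly $y_j^a$. Under truthful coalition reporting, agent $j$ receives $y_j^b \neq y_j^a$; if instead agent $i$ deviates to $a$ while agent $j$ reports truthfully, agent $j$ receives exactly her peak $y_j^a$, so by single-peakedness she is strictly better off. Meanwhile agent $i$'s outcome depends neither on her own report nor on the unchanged report of $j$, so agent $i$ is indifferent. Thus $\{i,j\}$ contains a member who is strictly better off and none worse off, contradicting GSP.

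I do not expect a hard technical step here; the crux is the structural observation (exactly the one behind Example~\ref{ex:sp-not-gsp}) that under impartiality an agent's own deviation never affects her but can move a co-conspirator's outcome, so \emph{any} genuine dependence of the line on the reports is exploitable by a coalition in which one member costlessly steers the other onto her peak. The only points needing care are invoking the characterization in Theorem~\ref{thm:impartial} to obtain the closed form, using admissibility to guarantee $\hy_j$ actually moves, and the implicit assumption $n \geq 2$ so that a second agent $j$ exists.
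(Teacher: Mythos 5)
Your proposal is correct and takes essentially the same route as the paper's proof: both invoke the characterization of Theorem~\ref{thm:impartial}, use admissibility ($x_i \neq x_j$) to show that non-constancy of some $g_i$ moves a second agent $j$'s outcome, and then build the two-agent coalition in which agent $j$'s peak is placed at the post-deviation outcome so that agent $i$'s costless misreport makes $j$ strictly better off. The only cosmetic difference is that you argue by contraposition where the paper argues by contradiction, and you make explicit the (shared, implicit) assumption $n \ge 2$.
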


\section{Characterizing Strategyproof Mechanisms}
\label{sec:characterization}

As mentioned in Section~\ref{subsec:l1-ERM}, \citet{Moul80} studied the one-dimensional setting ($d=0$), and analytically characterized all strategyproof mechanisms for $n$ agents. While we are unable to provide an analytical characterization for multidimensional linear regression, we provide two non-constructive characterizations, and discuss their implications. 

Interestingly, to characterize strategyproof mechanisms for linear regression with $n$ agents, we use the characterization of strategyproof mechanisms for the one-dimensional setting with a single agent. In this case, \citet{Moul80} shows that a mechanism is strategyproof if and only if there exist constants $\alpha^1,\alpha^2 \in \bbRbar$ such that when the agent reports $y$, the mechanism returns $\hy = \med(y,\alpha^1,\alpha^2)$. Constants $\alpha^1$ and $\alpha^2$ are called {\em phantoms}. First, we extend this result by providing an alternative characterization, which uses the following definition. The proof of the next result is in Appendix~\ref{appendix:missing-proofs}.

\begin{definition}[Locally Constant Function]
	For $A,B \subseteq \bbR$, function $f: A \to B$ is called locally constant at $x \in A$ if there exists $\eps > 0$ such that for all $x' \in [x-\eps,x+\eps]$, $f(x') = f(x)$. 
\end{definition}

\begin{lemma}\label{lem:1dim-1agent}
	Suppose mechanism $\pi : \bbR \to \bbR$ for the one-dimensional setting with a single agent elicits private value $y$ from the agent and returns $\pi(y)$. Then, $\pi$ being strategyproof is equivalent to each of the following conditions.
	\begin{enumerate}[label=(\alph*)]
		\item\label{thm1agent-parta} There exist constants $\alpha^1,\alpha^2 \in \bbRbar \triangleq \bbR \cup \set{-\infty,\infty}$ such that for all $y \in \bbR$, $\pi(y) = \med(y,\alpha^1,\alpha^2)$.
		\item\label{thm1agent-partb} $\pi$ is continuous, and for every $y \in \bbR$, either $\pi(y) = y$ or $\pi$ is locally constant at $y$. 
	\end{enumerate}
\end{lemma}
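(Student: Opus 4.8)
The plan is to treat Moulin's characterization (the equivalence SP $\iff$ \ref{thm1agent-parta}, stated in the paragraph preceding the lemma) as given, and to reduce the remaining content of the lemma to proving the equivalence \ref{thm1agent-parta} $\iff$ \ref{thm1agent-partb}. I would establish \ref{thm1agent-parta} $\Rightarrow$ \ref{thm1agent-partb} by direct case analysis and \ref{thm1agent-partb} $\Rightarrow$ \ref{thm1agent-parta} by a structural argument about the set on which $\pi$ is locally constant.

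For \ref{thm1agent-parta} $\Rightarrow$ \ref{thm1agent-partb}: assume $\pi(y) = \med(y,\alpha^1,\alpha^2)$ and, without loss of generality (by symmetry of the median), $\alpha^1 \le \alpha^2$. Then $\pi$ equals the clamping map returning $\alpha^1$ for $y \le \alpha^1$, the identity for $\alpha^1 \le y \le \alpha^2$, and $\alpha^2$ for $y \ge \alpha^2$, interpreting infinite phantoms in the obvious way (e.g. $\med(y,-\infty,c) = \min(y,c)$). Continuity is immediate since the two constant pieces agree with the identity piece at the finite breakpoints, and at every $y$ the function is either the identity (when $\alpha^1 \le y \le \alpha^2$) or locally constant (when $y < \alpha^1$ or $y > \alpha^2$), with the breakpoints $\alpha^1,\alpha^2$ themselves satisfying $\pi(y) = y$. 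This verifies \ref{thm1agent-partb}.

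For \ref{thm1agent-partb} $\Rightarrow$ \ref{thm1agent-parta}: let $C = \{y : \pi \text{ is locally constant at } y\}$. I would first observe that $C$ is open (shrinking the radius of a constancy neighborhood shows every nearby point is also locally constant), so it is a disjoint union of open intervals, and $\pi$ is constant on each component (a locally constant function on a connected set is constant). The hypothesis forces $\pi(y) = y$ for every $y \notin C$. The crux of the argument, and the step I expect to be the main obstacle, is showing that $C$ has no bounded component: if $(a,b)$ were a bounded maximal component with $\pi \equiv c$ on it, continuity would give $\pi(a) = \pi(b) = c$, but $a,b \notin C$ forces $\pi(a) = a$ and $\pi(b) = b$, whence $c = a = b$, contradicting $a < b$. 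Hence the components of $C$ are unbounded rays only, so $\bbR \setminus C$ is a (possibly empty, possibly unbounded) closed interval $[\alpha^1,\alpha^2]$ on which $\pi$ is the identity; by continuity the constant value on a ray $(-\infty,\alpha^1)$ (resp. $(\alpha^2,\infty)$) equals $\pi(\alpha^1) = \alpha^1$ (resp. $\alpha^2$), with the degenerate case $C = \bbR$ handled by setting $\alpha^1 = \alpha^2 = c$. Reassembling the pieces gives exactly $\pi(y) = \med(y,\alpha^1,\alpha^2)$, establishing \ref{thm1agent-parta}.

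Combining with Moulin's SP $\iff$ \ref{thm1agent-parta} yields the full chain of equivalences. The only delicate points are the correct bookkeeping of the infinite-phantom boundary cases and the clean handling of the degenerate shapes of $C$ (empty, a single ray, or all of $\bbR$), neither of which poses real difficulty once the ``no bounded component'' observation is in place.
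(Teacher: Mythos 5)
Your proposal is correct and follows essentially the same route as the paper's own proof: both treat Moulin's result as giving SP $\iff$ \ref{thm1agent-parta}, prove \ref{thm1agent-parta} $\Rightarrow$ \ref{thm1agent-partb} by direct verification, and prove \ref{thm1agent-partb} $\Rightarrow$ \ref{thm1agent-parta} by showing the locally-constant set is open, decomposing it into disjoint open intervals on which $\pi$ is constant, and ruling out bounded components via the continuity-plus-identity-at-endpoints contradiction, leaving only the ray/degenerate cases. The paper enumerates the five resulting shapes of the set explicitly, but this is the same argument in the same order.
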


In the one-dimensional setting, \citet{Moul80} observed that a mechanism is strategyproof if and only if its outcome is strategyproof in the report of each individual agent when other agents' reports are fixed. That is, a mechanism $\pi:\bbR^n \to \bbR$ for $n$ agents is strategyproof if and only if
\begin{align}
&\forall i \in [n],\ \ \exists \alpha^1_i,\alpha^2_i \in \bbRbar \text{ independent of } y_i\ \text{ s.t. }\ \pi(y_1,\ldots,y_n) = \med(y_i,\alpha^1_i,\alpha^2_i).
\label{eq:moulin-char}
\end{align}
\citet{Moul80} solved Equation~\eqref{eq:moulin-char} to derive an elegant analytical expression for $\pi$ in terms of $\set{y_i}_{i \in [n]}$. Note that in this equation, the outcome $\hy = \pi(y_1,\ldots,y_n)$ is common to all agents. 

In contrast, in linear regression each agent $i$ has a potentially different outcome $\hy_i$. Like before, strategyproofness requires that each $\hy_i$ obey the conditions in Lemma~\ref{lem:1dim-1agent}, when seen as a function of $y_i$, when other agents' reports are fixed. However, the outcomes for different agents are now constrained so that $(\vx(i),\hy_i)_{i \in N}$ lie on a hyperplane. This added complexity prevented us from solving the equations to derive an analytical characterization, despite significant effort. The only exception was the special case of {\em impartial} mechanisms, where we further restrict $\hy_i$ to be independent of $y_i$ (Theorem~\ref{thm:impartial}). This corresponds to the case where $\alpha^1_i=\alpha^2_i$ for each agent $i$. Nonetheless, by simply applying Lemma~\ref{lem:1dim-1agent} for every agent $i$, we obtain the following non-constructive characterization of strategyproof mechanisms for linear regression.

\begin{theorem}\label{thm:our-char}
	Given public information $\vx$, mechanism $M^{\vx}$ for linear regression being strategyproof is equivalent to each of the following conditions.  
	\begin{enumerate}[label=(\alph*)]
	\item For every $\vy_{-i} \in \bbR^{n-1}$ and $i \in N$, there exist $\ell_i,h_i \in \bbRbar$ such that $\hy_i(M^{\vx}(\vy)) = \med(y_i,\ell_i,h_i)$ for all $y_i \in \bbR$;
	\item For every $\vy_{-i} \in \bbR^{n-1}$ and $i \in N$, function $f_i(\cdot) = \hy_i(M(\cdot,\vy_{-i}))$ is continuous, and for every $y_i \in \bbR$, either $f_i(y_i) = y_i$ or $f_i$ is locally constant at $y_i$.
	\end{enumerate}
\end{theorem}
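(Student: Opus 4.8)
The plan is to reduce the multidimensional characterization of Theorem~\ref{thm:our-char} to the single-agent, one-dimensional characterization of Lemma~\ref{lem:1dim-1agent} by viewing each agent's outcome as a function of only her own report, holding the others fixed. The central observation is that in linear regression, although a mechanism outputs a full hyperplane $\vbeta$, the only thing agent $i$ cares about is her own outcome $\hy_i(M^{\vx}(\vy)) = \vbeta^T\ \bvx(i)$, and her preferences are single-peaked in this scalar quantity with peak at $y_i$. So for fixed $\vy_{-i}$, the induced map $f_i(y_i) = \hy_i(M^{\vx}(y_i,\vy_{-i}))$ is exactly a one-dimensional, single-agent mechanism in the sense of Lemma~\ref{lem:1dim-1agent}.

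First I would establish the forward direction. Suppose $M^{\vx}$ is strategyproof. Fix any $i \in N$ and any $\vy_{-i} \in \bbR^{n-1}$, and consider $f_i : \bbR \to \bbR$ defined by $f_i(y_i) = \hy_i(M^{\vx}(y_i,\vy_{-i}))$. The strategyproofness of $M^{\vx}$ says precisely that agent $i$ weakly prefers reporting $y_i$ truthfully to reporting any $\ty_i$, for all single-peaked preferences with peak $y_i$; since agent $i$'s preferences depend only on the scalar outcome $\hy_i$, this is exactly the statement that $f_i$ is a strategyproof one-dimensional single-agent mechanism. Applying Lemma~\ref{lem:1dim-1agent} to $f_i$ then yields both characterizations simultaneously: part~(a) gives constants $\ell_i, h_i \in \bbRbar$ (playing the role of $\alpha^1, \alpha^2$) with $f_i(y_i) = \med(y_i, \ell_i, h_i)$, and part~(b) gives that $f_i$ is continuous and at every point is either the identity or locally constant. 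Since $i$ and $\vy_{-i}$ were arbitrary, conditions~(a) and~(b) of Theorem~\ref{thm:our-char} hold.

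For the reverse direction, suppose either condition~(a) or~(b) holds; by the equivalence in Lemma~\ref{lem:1dim-1agent} it suffices to assume the condition that, for every $i$ and every fixed $\vy_{-i}$, the map $f_i$ is a strategyproof single-agent one-dimensional mechanism. I must show $M^{\vx}$ is strategyproof, i.e.\ for every $i$, every $y_i$, and every report profile $\tvy$, we have $\hy_i(M^{\vx}(y_i, \tvy_{-i})) \pref_i \hy_i(M^{\vx}(\tvy))$. Fixing the competing reports at $\tvy_{-i}$ and letting $f_i(\cdot) = \hy_i(M^{\vx}(\cdot, \tvy_{-i}))$, the desired inequality is exactly $f_i(y_i) \pref_i f_i(\ty_i)$, which is the strategyproofness of the single-agent mechanism $f_i$ with true peak $y_i$ — guaranteed by Lemma~\ref{lem:1dim-1agent}. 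This holds for every agent and every deviation, so $M^{\vx}$ is strategyproof.

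The only genuinely delicate point, and where I expect to spend the most care, is verifying that strategyproofness of the full mechanism $M^{\vx}$ is \emph{equivalent} to the simultaneous strategyproofness of all the induced single-agent maps $f_i$, rather than merely implied in one direction. This equivalence hinges on the definition of strategyproofness quantifying over \emph{all} single-peaked preferences and over \emph{all} competing reports $\tvy_{-i}$: because a deviation by agent $i$ affects only her own report while $\tvy_{-i}$ is held fixed, the multi-agent incentive constraint decomposes cleanly into independent per-agent, per-context constraints, each of which is precisely a one-dimensional single-agent constraint. Once this decomposition is made explicit, the theorem follows immediately by invoking Lemma~\ref{lem:1dim-1agent} coordinatewise; no new analytic work beyond that lemma is required.
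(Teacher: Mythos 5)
Your proposal is correct and follows exactly the paper's route: the paper likewise observes that, since agent $i$'s preferences depend only on the scalar $\hy_i$ and her deviation affects only her own report, strategyproofness of $M^{\vx}$ decomposes into per-agent, per-$\vy_{-i}$ single-agent constraints, and then invokes Lemma~\ref{lem:1dim-1agent} coordinatewise. The ``delicate point'' you flag (that the decomposition is a genuine equivalence, not just one implication) is precisely the observation the paper attributes to Moulin and uses without further elaboration, so your write-up is, if anything, slightly more explicit than the paper's.
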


The first condition provides an analytical form of $\hy_i$ in terms of $y_i$, and is perhaps the more useful characterization. For instance, we crucially use this characterization in the next section to give a lower bound on the efficiency of strategyproof mechanisms. Our earlier (more complex) proof of group strategyproofness of GRH mechanisms (Theorem~\ref{thm:gen-rl-gsp}) was also based on this condition, and identified the precise $\ell_i$ and $h_i$ for each agent $i$. 

Note that for fixed $\vy_{-i}$, we have $\hy_i = y_i$ when $y_i \in [\ell_i,h_i]$. For $y_i \le \ell_i$, $\hy_i = \ell_i$ is fixed, and for $y_i \ge h_i$, $\hy_i = h_i$ is fixed. We therefore say that agent $i$ is {\em influential} over the interval $(\ell_i,h_i)$, and call $\ell_i$ and $h_i$ the {\em lower} and {\em upper influence bounds}, respectively. Analysis of influence bounds has received attention in the statistics literature, where it is called {\em sensitivity analysis}. For instance, \citet{NW85} observed that under $\erm$, the regression hyperplane is unaffected when the dependent variable of a point is changed so that the point still lies on the same side of the hyperplane as before. From Theorem~\ref{thm:our-char}, we can see that for every strategyproof mechanism, doing so should at least keep the outcome for agent $i$ unchanged. \citet{NW85} also focused on computing the influence bounds. Theorem~\ref{thm:our-char} lends a simple algorithm to compute influence bounds (see Appendix~\ref{appendix:influence-bounds}).
Finally, note that while $\hy_i$ must be continuous in $y_i$, it need not be continuous in $\vy$ (see our discussion on Proposition~\ref{prop:impartial-few-agents}).

\section{Efficiency of Strategyproof Mechanisms}
\label{sec:errors}

Insofar, we studied families of strategyproof mechanisms for linear regression. %
In the absence of strategic considerations, a popular mechanism for linear regression is the OLS (ordinary least squares), which is the empirical risk minimizer for the squared loss. Under this loss function, which is also called the {\em residual sum of squares} ($\rss$), the loss when choosing hyperplane $\vbeta$ given data points $\calD$ is 
$
 \rss(\calD,\vbeta) = \sum_{i \in N} \left(y_i - \vbeta^T\ \bvx(i)\right)^2.
$
A classic justification for the OLS is due to the Gauss-Markov theorem, which states that when the errors (deviations of data points from an underlying hyperplane we wish to identify) are stochastic, zero in expectation, uncorrelated, and of equal variance, the OLS is the {\em best linear unbiased estimator}. %

However, in our strategic setting, the OLS is not strategyproof~\cite{DFP10}. This raises an important question: {\em Is there a strategyproof mechanism that is close to the OLS?} We assess this by the worst-case approximation ratio of a mechanism for the optimal squared loss.

\begin{definition}[Efficiency]
	Given $\vx$, we say that mechanism $M^{\vx}$ for linear regression is $c$-efficient if for every $\calD = (\vx(i),y_i)_{i \in N}$, we have
	$
	\rss(\calD,M^{\vx}(\vy)) \le c \cdot \inf_{\vbeta} \rss(\calD,\vbeta). 
	$
\end{definition}

We show that no strategyproof mechanism that is too close to the OLS can be strategyproof. The proof of the next result leverages our characterization of strategyproof mechanisms (Theorem~\ref{thm:our-char}).

\begin{theorem}\label{thm:efficiency-lower-bound}
	For $n \ge 4$, there exist $\vx$ for which no strategyproof mechanism is $(2-\epsilon)$-efficient for any $\epsilon > 0$.
\end{theorem}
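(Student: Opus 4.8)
The plan is to exhibit, for $n \ge 4$, a specific public configuration $\vx$ together with a one-parameter family of data sets $\calD$ on which every strategyproof mechanism incurs squared loss at least $(2-\epsilon)$ times the optimal. The natural strategy is a limit/contradiction argument: assume a strategyproof mechanism $M^{\vx}$ is $(2-\epsilon)$-efficient and use the analytic form of strategyproofness from Theorem~\ref{thm:our-char}(a), namely that for each agent $i$ and each fixed $\vy_{-i}$ we have $\hy_i = \med(y_i,\ell_i,h_i)$ for influence bounds $\ell_i,h_i \in \bbRbar$ that do not depend on $y_i$. The key leverage is that influence bounds are \emph{independent of the agent's own report}, so I can drive $y_i$ to $\pm\infty$ while $\hy_i$ stays pinned at the finite bound $h_i$ (or $\ell_i$), forcing the residual $|y_i - \hy_i|$ to blow up and hence the $\rss$ of $M$ to blow up, whereas the optimal $\rss$ stays controlled.

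Concretely, I would first pick the simplest informative case, $d=1$ (simple linear regression), and choose $x$-coordinates (for instance a symmetric set such as four points at $x = -3,-1,1,3$, or more generally two clusters) so that the optimal least-squares line through a carefully chosen $\vy$ is easy to compute. Then I would set up two data sets that share the same public information and the same outcome profile $(\hy_i)$ under $M$ but differ in the true $\vy$, exploiting the median/influence-bound structure. The heart of the argument is to show that at least one agent must be ``capped'' (i.e. $\hy_i$ stuck at a finite influence bound while $y_i$ moves), and to quantify how far $M$'s line is forced from the agents' true values. Comparing $\sum_i (y_i - \hy_i)^2$ against $\inf_\vbeta \rss$ on this family, and taking the worst case over the free parameter, should yield a ratio approaching exactly $2$; the constant $2$ presumably arises because a strategyproof mechanism can at best place the line to zero out the residuals of roughly half the points (the influential ones) while doubling the squared error contribution of the capped points relative to OLS, which splits the error symmetrically.

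I expect the main obstacle to be \textbf{the matching lower bound of exactly $2$ rather than some weaker constant}: it is easy to show \emph{some} constant $>1$ is unavoidable, but pinning the ratio to $2-\epsilon$ for \emph{every} $\epsilon$ requires choosing the configuration and the data family so that the worst-case ratio over all strategyproof mechanisms converges to $2$ from below. This means I must argue over \emph{all} possible influence-bound assignments the mechanism could use (there is a quantifier over mechanisms), showing that no matter how $M$ sets its bounds, some data set in the family defeats $(2-\epsilon)$-efficiency. The cleanest route is probably a minimax/averaging argument: construct a small finite set of data sets (or a continuum parametrized by one scalar $t\to\infty$) that agree on $\vy_{-i}$ for the relevant agent, so that the mechanism's outcome is forced by Theorem~\ref{thm:our-char} to be continuous and eventually constant in $t$, while the optimal loss grows only like a constant fraction of $M$'s loss. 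The requirement $n \ge 4$ likely enters because with fewer points the optimal line can interpolate too many data points (e.g. $n=d+1=2$ gives zero optimal loss), so one needs enough ``free'' points beyond the $d+1$ degrees of freedom of the hyperplane to create an irreducible gap between the capped residuals and the OLS residuals.
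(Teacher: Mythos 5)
Your high-level skeleton does coincide with the paper's proof: fix the public information, freeze all reports except one distinguished agent's, invoke Theorem~\ref{thm:our-char}(a) to get influence bounds $\ell,h\in\bbRbar$ that do not depend on that agent's report, and split into the case where the mechanism follows the agent ($\hy_i=y_i$) and the case where it is capped at a bound. The genuine gap is precisely the step you flag as ``the main obstacle'' and then leave unresolved: you never exhibit an $\vx$ for which \emph{both} cases force ratio $2$, and the configurations you suggest cannot. The paper's construction is a cluster of $n$ points at $x_i=i$ with $y_i=0$ plus a single far-away point at $x_{n+1}=X$, where $X=\Theta(n^{1.5})$ is tuned by Equation~\eqref{eqn:X}; equivalently, $X$ is chosen so that the outlier's OLS leverage $\lambda$ (its diagonal entry of the hat matrix) is exactly $1/2$. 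For the family $\vy=(0,\ldots,0,Y)$ one has: optimal loss $(1-\lambda)Y^2$; loss of any line through $(X,Y)$ at least $\frac{1-\lambda}{\lambda}Y^2$; and loss of any line through $(X,t)$ with $t\le 0$ at least $1$ when $Y=1$. Hence the follow case (take $Y=h$ when $0<h<\infty$, or any $Y>0$ when $h=\infty$) yields ratio $1/\lambda$, the capped case ($h\le 0$, take $Y=1$) yields ratio at least $1/(1-\lambda)$, and since the mechanism is only in one of these cases, the bound provable from this family is $\min\{1/\lambda,\,1/(1-\lambda)\}$, which equals $2$ only when $\lambda=1/2$. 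Your symmetric suggestion $x=(-3,-1,1,3)$ has leverages $0.7,0.3,0.3,0.7$, so every one-agent family gives at best $\min\{10/7,10/3\}=10/7$: a mechanism whose bounds (on these families) always follow the two outer agents and pin the two inner agents at $0$ cannot be pushed past $10/7$ by this argument, so no contradiction with $(2-\eps)$-efficiency is reachable for small $\eps$. Tuning the configuration is therefore not a refinement of the argument; it \emph{is} the argument.

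Two smaller points. First, your opening claim that as $y_i\to\pm\infty$ the residual blows up ``whereas the optimal $\rss$ stays controlled'' is false: the optimal loss also grows, as $(1-\lambda)Y^2$, i.e., proportionally to the mechanism's loss; this is exactly why only a fixed constant ratio (not an unbounded one) is provable, and why the leverage tuning is unavoidable. You implicitly correct this in your last paragraph, but the corrected statement undercuts the first paragraph's plan. Second, your heuristic for the constant --- that the mechanism ``zeroes out residuals of roughly half the points while doubling the contribution of the capped points'' --- is not what drives the bound; the factor $2$ comes from a single tuned-leverage point for which interpolating it and ignoring it are equally costly, with OLS splitting that cost in half. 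Finally, no limiting argument is needed in the follow case: the paper sets $Y=h$ exactly, so the mechanism's line passes through $(X,h)$ and the constrained least-squares bound applies verbatim.
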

\begin{proof}
	For simplicity of notation, we use $n+1$ agents instead of $n$ agents (and assume $n+1 \ge 4$, i.e., $n \ge 3$). We also consider simple linear regression ($d=1$); the proof easily extends to higher dimensions by simply setting all other coordinates to zero. Fix $n \ge 3$. Consider a setting with $n+1$ agents where $x_i = i$ for $i \in [n]$, and $x_{n+1} = X$, where $X$ is the solution of the following equation:
	\begin{equation}
	\frac{n^3-n}{2(1+3n+2n^2+6X^2-6Xn-6X)} = 1.
	\label{eqn:X}
	\end{equation}
	Interested readers may note that $X = \Theta(n^{1.5})$. Let $T$ denote the LHS in Equation~\eqref{eqn:X}. 
	
	Consider a strategyproof mechanism $M^{\vx}$. Suppose $M^{\vx}$ is $c$-efficient. We want to show that $c \ge 2$. We consider a family of inputs $\vy$, in which we fix $y_i = 0$ for $i \in [n]$, and vary $y_{n+1} = Y$. First, we note that the optimal $\rss$, as a function of $Y$, is given by
	$$
	f_0(Y) = Y^2 \cdot \frac{n^3-n}{2+5n+4n^2+n^3-12X-12nX+12X^2} = Y^2 \cdot \frac{T}{T+1} = \frac{Y^2}{2},
	$$
	where the first transition is obtained by minimizing $(Y-X \cdot \beta_1 -\beta_0)^2 + \sum_{i=1}^n (i \cdot \beta_1+\beta_0)^2$ over all $(\beta_1,\beta_0)$, the second transition follows through simple algebra, and the final transition follows from Equation~\eqref{eqn:X}. For verification of these claims through Mathematica, see Figure~\ref{fig:mathematica} in Appendix~\ref{appendix:missing-proofs}.
	
	Recall that we fixed $y_i$ for $i \in [n]$. Due to our characterization result (Theorem~\ref{thm:our-char}), there exist $\ell,h \in \bbRbar$ with $\ell \le h$ such that the line returned by the mechanism passes through $(X,\med(Y,\ell,h))$ for all $Y$. We take two cases. 
	
	{\em Case 1: $h > 0$.} Set $Y = h$. Then, the line returned by the mechanism  passes through $(X,h)$. In this case, we can show that the $\rss$ of the mechanism is at least 
	$$
	f_1 = h^2 \cdot \frac{n^3-n}{2(1+3n+2n^2+6X^2-6Xn-6X)} = h^2 \cdot T = h^2,
	$$
	where the first transition is obtained by minimizing $(Y-\beta_1\cdot X-\beta_0)^2 + \sum_{i=1}^n (\beta_1\cdot i+\beta_0)^2$ over all $(\beta_1,\beta_0)$ which satisfy $\beta_1 \cdot X + \beta_0 = Y$, and the rest follows from Equation~\eqref{eqn:X}. For verification of these claims through Mathematica, see Figure~\ref{fig:mathematica} in Appendix~\ref{appendix:missing-proofs}. 
This implies $c \ge f_1/f_0(h) = 2$.
	
	{\em Case 2: $h \le 0$.} Set $Y=1$. Then, the line returned by the mechanism passes through $(X,h)$. In this case, the $\rss$ of the mechanism is at least $f_2 = 1$ because agent $n+1$ contributes $(1-h)^2 \ge 1$ to the squared loss. Once again, we have $c \ge f_2/f_0(1) = 2$. 
	
	The proof is complete as we have $c\ge 2$ in each case.
\end{proof}

For $n=2$ agents (or $n = d+1$ agents in $d+1$ dimensions), there is an obvious $1$-efficient strategyproof mechanism which returns a hyperplane passing through all input points. Theorem~\ref{thm:our-char} leaves open the case of $n=3$ in two dimensions.

\section{Discussion}
\label{sec}

Our work leaves several open questions. Perhaps the most ambitious one is to find a constructive characterization of all strategyproof or group strategyproof mechanisms for linear regression, which may allow us to pinpoint the most efficient strategyproof mechanism; \citet{CPS16b} provide a similar analysis in the one-dimensional setting. It is easy to show that $\erm$ is $n$-efficient (see Proposition~\ref{prop:erm-efficient} in Appendix~\ref{appendix:missing-proofs}). Does there exist a more efficient strategyproof mechanism? It would also be interesting to analyze efficiency in a stochastic setting where the data points are drawn from an underlying distribution. 

The characterization result of \citet{Moul80} for strategyproof and anonymous mechanisms in the one-dimensional setting extends the median to generalized medians by adding fixed phantom values, and then taking the median. It is also shown that adding $n+1$ phantoms is sufficient to obtain full generality. We can extend all our proposed families of mechanisms by adding a certain number of ``phantom points'' in $\bbR^{d+1}$, and then applying the mechanisms to the union of data points and phantom points. The resulting mechanism retains the incentive guarantees.\footnote{We also considered adding phantom values directly in the equations where a median is used. However, most such attempts violated strategyproofness.} Given $n$ data points, how many phantoms are sufficient to obtain full generality? Do the phantoms play a role in obtaining the elusive constructive characterization?

Another interesting observation is that our generalized resistant hyperplane mechanisms are guaranteed pass through $d+1$ input points in $d+1$ dimensions. %
It is known that at least one minimizer of the $L1$ loss also has this property. It would be interesting to identify a generic family of conditions, which, when imposed in addition to the requirement of making $d+1$ residuals zero, yield group strategyproofness.

Finally, \citet{DFP10} study a regression setting in which a single agent may control multiple data points, show that $\erm$ is no longer strategyproof, and provide novel strategyproof mechanisms. It would be useful to see if our ideas can be used to design additional strategyproof mechanisms in this model. Another interesting variant is when only a small number of data points are held by strategic agents, but the mechanism does not know which ones. A similar setting was studied by~\citet{CSV17}, but for classification and with adversarial manipulations. On a high level, we view our work as a stepping stone to studying incentives in more realistic machine learning environments.

\bibliographystyle{acm}
\bibliography{abb,ultimate,refs}

\clearpage
\appendix
\section*{Appendix}
\section{Missing Results and Proofs}
\label{appendix:missing-proofs}

In this section, we present the results and proofs missing from the main body of the paper.

\subsection{Generalized $\erm$ is Group Strategyproof}

\begin{proof}[Proof of Theorem~\ref{thm:gen-l1-gsp}]

We will follow the structure of the proof presented by \citet{DFP10}.

\begin{proposition}\label{prop:dfp}
Let $\hat{S} = \{(\vx_i, \hy_i)\}_{i=1}^m$ and $\tilde{S} = \{(\vx_i,\ty_i)\}_{i=1}^m$ be two training sets on the same set of points and let $\hat{f} = \texttt{w-ERM-reg}(\mathcal{F},\ell,\hat{S})$ and $\tilde{f} = \texttt{w-ERM-reg}(\mathcal{F}, \ell,\tilde{S})$, where by $\texttt{w-ERM-reg}$ we denote the weighted $L_1$-ERM with convex regularizer (i.e., $\tf = \arg\max_f \sum_{i \in N} w_i^{\vx} |y_i - f(\vx_i)| + h(f)$ ) and by $\ell$ the $L_1$ loss function. If $\hat{f} \neq \tilde{f}$ then, there exists $i \in N$, such that $\hy_i \neq \ty_i$ and 
\begin{equation}\label{eq:loss}
\ell(\hat{f}(\vx_i),\hat{y}_i) < \ell(\tilde{f}(\vx_i),\hat{y}_i)
\end{equation} 
\end{proposition}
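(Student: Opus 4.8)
The plan is to mirror the proof of the analogous comparison lemma of \citet{DFP10}, checking that their argument survives the two generalizations, namely the weights $w_i^{\vx}$ (which I take to be nonnegative so that the loss stays convex) and the convex regularizer $h$. First I would record the two optimality conditions. Since $\hf$ is the output of generalized $\erm$ on $\hS$ and $\tf$ is its output on $\tS$, both are global minimizers of their respective (convex) risk functionals over the convex hypothesis space $\calF$; in particular $\hR(\hf,\hS) \le \hR(\tf,\hS)$ and $\hR(\tf,\tS)\le\hR(\hf,\tS)$. Writing these out and subtracting, the regularizer terms $h(\hf)$ and $h(\tf)$ appear symmetrically and drop out, while the weights $w_i^{\vx}$ merely rescale each per-point loss. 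Adding the two inequalities and using that every index $i$ with $\hy_i=\ty_i$ contributes zero, I obtain an aggregate inequality $\sum_{i:\hy_i\neq\ty_i} w_i^{\vx}\,\Delta_i \le 0$, where $\Delta_i$ compares how $\hf$ and $\tf$ treat the two labels at point $i$ and is controlled by the monotone behaviour of the absolute-value loss.

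Second, I would upgrade this to a strict inequality using the tie-breaking rule. Because the output of generalized $\erm$ is the unique $\|\cdot\|$-minimizer among the risk minimizers and $\|\cdot\|$ is strictly convex, $\hf\neq\tf$ cannot have both functions simultaneously minimizing the risk on both $\hS$ and $\tS$: otherwise the tie-break would force $\|\hf\|<\|\tf\|$ from $\hS$ and $\|\tf\|<\|\hf\|$ from $\tS$, a contradiction. Hence at least one of the two optimality inequalities is strict, so $\sum_{i:\hy_i\neq\ty_i} w_i^{\vx}\,\Delta_i < 0$; in particular the set of indices where $\hy_i\neq\ty_i$ is nonempty.

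Finally, from this strict aggregate inequality I would extract a single index $i$ with $\hy_i\neq\ty_i$ at which $\hf$ fits $\hy_i$ strictly better than $\tf$, i.e. $\ell(\hf(\vx_i),\hy_i)<\ell(\tf(\vx_i),\hy_i)$. I expect this extraction to be the main obstacle. A negative aggregate does not by itself localize to a single well-behaved coordinate: an individual index can contribute negatively to the sum while $\tf$ still fits $\hy_i$ at least as well as $\hf$ there, so the summed inequality alone is not enough. Here I would instead lean on the full optimality of $\hf$ and $\tf$ as minimizers over the convex set $\calF$, testing $\hf$ against the segment toward $\tf$ and comparing the one-sided per-point derivatives of the $L_1$ loss, rather than on the scalar inequality obtained by plugging $\tf$ into $\hR(\cdot,\hS)$.

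This last step is precisely where \citet{DFP10} invoke the structure of the risk, and my task reduces to verifying that their three structural uses go through unchanged: convexity of the risk (preserved, since a nonnegatively weighted $L_1$ loss plus a convex $h$ is convex), the per-point comparison of residuals via the geometry of $|\cdot|$ (unaffected, as the weights $w_i^{\vx}\ge 0$ only rescale each coordinate), and uniqueness of the minimizer via the strictly convex tie-breaker. The regularizer in fact poses no difficulty, because its directional derivatives satisfy $Dh_{\hf\to\tf}+Dh_{\tf\to\hf}\le 0$ by convexity, so $h$ can only reinforce the inequalities rather than break them.
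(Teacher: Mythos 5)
There is a genuine gap here, and you have pointed at it yourself: everything hinges on the extraction step, which your proposal does not carry out. The aggregate inequality obtained by adding the two optimality conditions is essentially useless for the purpose at hand: the bracket for each manipulated index $i$ contains both $\ell(\hf(\vx_i),\hy_i)-\ell(\tf(\vx_i),\hy_i)$ and $\ell(\tf(\vx_i),\ty_i)-\ell(\hf(\vx_i),\ty_i)$, and the second difference is exactly the one you should expect to be negative (since $\tf$ is optimized against the reported labels $\ty_i$), so a strictly negative sum is perfectly compatible with $\tf$ fitting every label $\hy_i$ at least as well as $\hf$ does. Your fallback plan --- ``testing $\hf$ against the segment toward $\tf$'' with one-sided derivatives --- names the right object but is not a proof: you never specify how the desired per-point strict inequality would emerge from it, and that is precisely where all the work lies.

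The paper fills this hole with a contradiction-plus-interpolation argument that your sketch does not reconstruct. Let $U=\set{i:\hy_i\neq\ty_i}$ and assume the conclusion fails everywhere, i.e., $\ell(\hf(\vx_i),\hy_i)\ge\ell(\tf(\vx_i),\hy_i)$ for all $i\in U$. This negation hypothesis (not any aggregate inequality) drives a case analysis on the relative position of $\tf(\vx_i)$, $\hy_i$, $\hf(\vx_i)$, which yields per-point coefficients $\alpha_i\in(0,1]$ and constants $c_i$ such that, for $\alpha=\min_{i\in U}\alpha_i$, the interpolant $f_\alpha=\alpha\tf+(1-\alpha)\hf$ satisfies the identity $\hR(\hf,\tS)-\hR(\hf,\hS)=\hR(f_\alpha,\tS)-\hR(f_\alpha,\hS)$; the regularizer cancels on each side of this identity, and the weights merely multiply each $c_i$, which is where the two generalizations are absorbed. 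Chaining this identity with the optimality of $\hf$ on $\hS$, convexity of the risk in $f$, and optimality of $\tf$ on $\tS$ shows that $\hf$ and $\tf$ are both risk minimizers on $\tS$, and a further comparison through $f_\alpha$ shows they both attain the minimum norm among those minimizers; strict convexity of $\|\cdot\|$ then forces $\hf\equiv\tf$, contradicting $\hf\neq\tf$. Note the logical shape: the conclusion is obtained not by localizing a global inequality but by showing that its wholesale failure collapses the two outputs into one. Your steps 1--2 (the optimality inequalities, cancellation of $h$, and the observation that the tie-break precludes both functions being minimizers on both samples) are correct and consistent with the paper, but they are the easy part; without the interpolation construction the proposal does not prove the proposition.
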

\begin{proof}
Let $U = \{i: \hy_i \neq \ty_i\}$ and assume that $\ell(\hat{f}(\vx_i),\hat{y}_i) \geq \ell(\tilde{f}(\vx_i),\hat{y}_i)$ for all $i \in U$. First, we will consider functions of the form $f_{\alpha}(\vx) = \alpha\tilde{f}(\vx) + (1-\alpha)\hat{f}(\vx)$ and prove that there exists $\alpha \in (0,1]$ such that: 
\begin{equation}\label{eq:risk}
\hat{R}\left(\hf, \tS\right) - \hat{R}\left(\hf,\hS\right) = \hat{R}\left(f_{\alpha},\tS\right) - \hat{R}\left(f_{\alpha},\hS\right)
\end{equation}

For all $i \in U$ from Equation~\eqref{eq:loss} we get that either of the four inequalities below holds:
\begin{align}
\tf(\vx_i) \leq \hy_i < \hf(\vx_i), \quad\tf(\vx_i) \geq \hy_i > \hf(\vx_i) \label{eq:loss-case1}\\
\hy_i \leq \tf(\vx_i) < \hf(\vx_i),   \quad\hy_i \geq \tf(\vx_i) \geq  \hf(\vx_i) \label{eq:loss-case2} 
\end{align}

Observe now, that similarly to \citet{DFP10} since $\ty_i \neq \tf(\vx_i)$ produces the least sum of the weighted loss and the convex regularizer, then assuming that $\ty_i = \tf(\vx_i)$ will cause greater risk reduction for $\tf$, and therefore, $\tf$ will still minimize the risk. If one of the two inequalities in Equation~\eqref{eq:loss-case1} holds: 
\begin{equation}\label{eq:alpha}
\alpha_i = \frac{\hy_i - \hf(\vx_i)}{\tf(\vx_i) - \hf(\vx_i)}
\end{equation}
where $\alpha_i \in (0,1]$ and $f_{\alpha_i}(\vx_i) = \hy_i$. By substituting, for every $\alpha \in (0, \alpha_i]$ it holds that: 
\begin{equation*}
\ty_i \leq \hy_i \leq f_{\alpha_i}(\vx_i) < \hf(\vx_i) \quad \text{or} \quad \ty_i \geq \hy_i \geq f_{\alpha_i}(\vx_i) > \hf(\vx_i)
\end{equation*}
Based on the above and if we set $c_i = |\hy_i - \ty_i|$, we have that for all $\alpha \in (0,\alpha_i]$:

\begin{equation}\label{eq:c_i}
\ell (\hf(\vx_i), \ty_i) - \ell (\hf(\vx_i), \hy_i) = c_i \quad \text{and} \quad \ell(f_{\alpha}(\vx_i), \ty_i) - \ell(f_{\alpha}(\vx_i), \hy_i) = c_i
\end{equation}

By using Equation~\eqref{eq:loss-case2} and setting $\alpha_i = 1$ and $c_i = -|\ty_i - \hy_i|$ one ends up again with Equation~\eqref{eq:c_i}. Equation~\eqref{eq:c_i} holds for every $i \in U$ if we set $\alpha = \min_{i \in U} \alpha_i$ and it trivially holds for all $i \neq U$ with $c_i = 0$. Multiplying with the appropriate weights $w_i^{\vx}$ the equalities for each $i$ and summing them all, one gets to Equation~\eqref{eq:risk}. Note that in this step, the regularizer $h(f)$ can be ignored, since it cancels out from each side of the equation. 

Since $\mathcal{F}$ is a convex set, $f_{\alpha} \in \mathcal{F}$. Since $\hf$ minimizes the empirical risk with respect to $\hS$ over $\mathcal{F}$ we have that $\hat{R}(\hf, \hS) \leq \hat{R}(f_{\alpha}, \hS)$ and combining with Equation~\eqref{eq:loss} we get that $\hat{R}(\hf, \tS) \leq \hat{R}(f_{\alpha},\tS)$. The emprical risk function is convex in its first argument (we are using a strictly convex regularizer) we have that:
\begin{equation}\label{eq:convex}
\hR(\hf, \tS) \leq \hR(f_{\alpha}, \tS) \leq \alpha \hR (\tf,\tS) + (1 - \alpha) \hR(\hf, \tS)
\end{equation}
However, since $\tf$ minimizes the loss with respect to $\tS$: $\hR(\tf, \tS) \leq \hR(\hf,\tS)$ and thus
\begin{equation}\label{eq:min-equal}
\hR(\hf, \tS) = \hR(\tf, \tS) = \min_{f \in \mathcal{F}} \hR(f,\tS)
\end{equation}
In other words we have shown that both $\hf$ and $\tf$ minimize the empirical risk with respect to $\tS$. The only thing that is left to be shown for the contradiction argument is that the tie breaking step of the algorithm does not distinguish between two functions that are risk minimizers. In order words, we need to show that both functions attain the minimum norm over all empirical risk minimizers.

Combining Equation~\eqref{eq:min-equal} with \eqref{eq:convex} we get that $\hR(f_{\alpha},\tS) \leq \hR(\hf,\tS)$. From \eqref{eq:risk} we have that $\hR(f_{\alpha},\tS) \leq \hR(\hf,\hS)$ and thus $\hR(f_{\alpha},\hS) = \hR(\hf,\hS)$. However, $\hf$ was chosen to miminimize the empirical risk with respect to $\hS$ and therefore, $||\hf||\leq || f_{\alpha}||$. Using convexity of the norm, we get $||\hf|| \leq ||\tf||$. Also, for the case of sample $\tS$, the algorithm chose function $\tf$ and therefore $||\hf|| \geq ||\tf||$. This concludes our contradiction argument, since 
\begin{equation}
||\hf|| = ||\tf|| = \min_{f \in \mathcal{F} : \hR(f,\tS) = \hR(\tf,\tS)}||f||
\end{equation}
Hence, both functions attain the minimum norm over all empirical risk minimizers. Since the norm is strictly convex, its minimum is unique and therefore $\hf \equiv \tf$.
\end{proof}

Using now the aforementioned proposition we will complete the proof of the theorem. Again, we follow the proof of \citet{DFP10}. Let $S = \{(\vx_i,y_i)\}_{i=1}^m$ be the set of the true reports of agents in $N$ and let $\tilde{S} = \{(\vx_i, \ty_i)\}_{i=1}^m$ be the reports revealed by the agents and used to traing the regression function. Let $C \subseteq N$ be an arbitrary coalition of agents that misreport their information, in order to decrease some of their respective losses. We define the hybrid set of values where $\forall i \in N$: $\hy_i = y_i$ if $i \in C$ and $\hy_i = \ty_i$ otherwise. Let $\hat{S} = \{(\vx_i, y_i)\}_{i=1}^m$, $\hf = \texttt{w-ERM-reg}(\calF, \ell, \hat{S})$ and $\tf = \texttt{w-ERM-reg}(\calF, \ell, \tilde{S})$.

If $\hf \equiv \tf$ then agents in $C$ have no incentive to misreport. If $\hf \neq \tf$ then from Proposition \ref{prop:dfp} we have that there exists an agent $i \in N$ such that $\hy_i \neq \ty_i$ and $\ell(\hf(\vx_i),\ty_i) < \ell(\tf(\vx_i),\hy_i)$. Since $\hy_i \neq \ty_i$, agent $i$ must be a member of $C$. Therefore, $\ty_i = y_i$ and $\ell(\hf(\vx_i),y_i) < \ell(\tf(\vx_i),y_i)$. However, no member of $C$ should lose from reporting $\tilde{S}$ instead of $\hat{S}$, contradiction. Since the proof holds regardless of the values revealed by the agents outside of $C$, we have group-strategyproofness.
\end{proof}

\subsection{CRM Mechanisms are Also GRL Mechanisms}

In the CRM mechanism, we refer to the point in $S$ which has the median of all median CWAs (i.e., DA) as the ``directing point'', and the point in $S'$ to which this DA is pointing as the ``directed point''.

\begin{proof}[Proof of Lemma~\ref{lem:crm-part-rl}]
	First, we show that for any $S \subseteq N$, the $(S,S)$-CRM mechanism is $(L,R,k,k')$-GRL mechanism for some $L,R,k,k'$. Without loss of generality, we can assume $S = N$ as the other points are simply ignored. Thus, we will refer to the $(N,N)$-CRM mechanism.
	
	First, consider the case where $n$ is even. Let $L$ (resp. $R$) be the set of $n/2$ points with the smallest (resp. largest) $x$ coordinates. We show equivalence of the $(N,N)$-CRM mechanism to the $(L,R,k,k')$-GRL mechanism for appropriate $k$ and $k'$. Let $(\beta_1,\beta_0)$ be the line returned by the CRM mechanism.
	
	Choose $x^* \in (\max_{i \in L} x_i, \min_{i \in R} x_i)$, and define the following sets.
	
	\begin{itemize}
		\item $A = \left\{i : x_i < x^*, y_i  \ge \beta_1 x_i + \beta_0\right\}$
		\item $B = \left\{i : x_i > x^*, y_i > \beta_1 x_i + \beta_0\right\}$
		\item $C = \left\{i : x_i < x^*, y_i  < \beta_1 x_i + \beta_0\right\}$
		\item $D = \left\{i  : x_i > x^*, y_i \le \beta_1 x_i + \beta_0\right\}$
	\end{itemize}
	
	Note that $A \cup C = L$ and $B \cup D = R$. For $i \in N$, let $MCWA_i$ denote the median CWA from $i$ to points in $N\setminus\set{i}$. Note that for each $i \in L$, there are strictly more points in $N\setminus\{i\}$ to the right of it, than to the left of it, implying that $MCWA_i \in [\pi,2\pi]$. Similarly, for each $i \in R$, we have $MCWA_i \in [0,\pi]$. 
	
	Let $DA$ be the directing angle under the CRM mechanism. Then, $DA = \min_{i \in L} MCWA_i$ or $DA = \max_{i \in R} MCWA_i$ based on whether the outer median in the directing angle definition uses the right median or the left median. Let us assume it uses the left median, so $DA = \max_{i \in R} MCWA_i$. The proof for the other case is symmetric. 
	
	We now show that in this case, $B = C = \emptyset$. This would imply that the mechanism is equivalent to $(L,R,|L|,1)$-GRL because every point in $L$ has a non-positive residual while every point in $R$ has a non-negative residual. 
	
	Suppose for contradiction that $B \neq \emptyset$. Take a point $i_B \in B$. Note that $MCWA_{i_B} \le \max_{i \in R} MCWA_i = DA$. Note that the directing point $i^*$ is on the regression line, and hence $i^* \in D$. Then, one can check that if $x_{i_B} < x_{i^*}$, then $x_{i_B}$ has strictly less number of points to which its angle is less than $MCWA_{i_B}$ than $x_{i^*}$ has to which its angle is less than $MCWA_{i^*} = DA$. In the case $x_{i_B} > x_{i^*}$, the same happens but for points with angle greater than MCWA. This is a contradiction because each point has exactly $(n-2)/2$ points with angle more or less than its MCWA. Hence, $B = \emptyset$. Using a symmetric argument, we can establish $C = \emptyset$, which completes the proof. 
	
	We now consider the case where $n$ is odd. In this case, let $L$ (resp. $R$) be the set of $(n-1)/2$ points with the smallest (resp. largest) $x$-coordinate, and let $i^*$ be the point with the median $x$-coordinate. Once again, we have that $MCWA_i \in [\pi,2\pi]$ for each $i \in L$, and $MCWA_i \in [0,\pi]$ for each $i \in R$. We add $i^*$ to $L$ if $MCWA_{i^*} \in [\pi,2\pi]$, and to $R$ otherwise. Suppose we add it to $R$, and let $R' = R \cup \set{i^*}$. Then using an argument similar to above, we can check that the CRM mechanism is equivalent to $(L,R',k,k')$ for appropriate $k,k'$.

	The case where $S \cap S' = \emptyset$ and $\min(|S|,|S'|)=1$ is much simpler. Again, without loss of generality, we can consider $S \cup S' = N$, and for simplicity, consider the case where $n$ is even and $|S| = 1$. The other cases are similar. Let $S = \set{i^*}$. Without loss of generality, suppose there are more points to the right of $i^*$ than to the left of it. Let $R$ be the set of points to the right of $i^*$, and $L$ be the set of points to the left of $i^*$.  Then, it is easy to see that when we take the median CWA from $i^*$ (say, the left median, i.e., the $(n/2-1)^\th$ smallest CWA), it will always be towards a point in $R$. Moreover, it will be the $(n/2-1-|S|)^\th$ smallest CWA towards points in $R$. However, CWAs towards points in $R$ are monotonic in slopes to points in $R$. Hence, the regression line will make the $(n/2-1-|S|)^\th$ smallest residual in $R$ zero. In other words, the mechanism is equivalent to $(\set{i^*},R,1,n/2-1-|S|)$-GRL.
\end{proof}

\subsection{Impartial Mechanisms}

We now present the proof of Theorem~\ref{thm:impartial}. First, we need the following definition. 

\begin{definition}[Completely Additively Separable]
	Function $f: \bbR^k \to \bbR$ is called {\em completely additively separable} if there exist functions $\set{g_i}_{i=1}^k$ such that $f(t_1,\ldots,t_k) = \sum_{i=1}^k g_i(t_i)$ for all $\vec{t} = (t_1,\ldots,t_k) \in \bbR^k$. 
\end{definition}

It is well known that $f$ is completely additively separable if and only if for all $\vec{t} \in \bbR^k$, $i \in [k]$, and $t'_i \in \bbR$, $f(t_i,\vec{t}_{-i})-f(t'_i,\vec{t}_{-i})$ is independent of $\vec{t}_{-i}$. 

\begin{proof}[Proof of Theorem~\ref{thm:impartial}]
	We omit $\vx$ from all superscripts for simplicity. Suppose mechanism $M$ is given by Equation~\eqref{eq:impartial-hyperplane}. Then:
	\begin{align*}
	\hy_i(\vbeta) = \lbr \vbeta, \bvx(i) \rbr &= \lbr \textstyle\sum_{j \in N} g_j(y_j), \vx(i) \rbr + c - \textstyle\sum_{j \in N} \lbr g_j(y_j), \vx(j) \rbr\\
	&= c + \textstyle\sum_{j \in N\setminus\set{i}} \lbr g_j(y_j), \vx(i)-\vx(j) \rbr.
	\end{align*}
	Note that $\hy_i(\vbeta)$ is independent of $y_i$, which implies that $M$ is impartial.
	
	We now prove the converse for simple linear regression ($d=1$) with an admissible set of points. Suppose mechanism $M$ is impartial. Given $\vy$, let $\beta_1(\vy)$ be the slope of the line returned by $M$, and $f_i(\vy) = \hy_i(M(\vy))$ be the outcome for agent $i$. Because $M$ is impartial, $f_i$ is independent of $y_i$. Hence, we denote the outcome for agent $i$ by $f_i(\vy_{-i})$. 
	
	We want to show that $h$ is completely additively separable. Equivalently, for every $\vy$ and $\tvy$ such that $\vy_{-i} = \tvy_{-i}$, we want to show that $\beta_1(\vy)-\beta_1(\tvy)$ is independent of $\vy_{-i}$. Choose $j \in N\setminus\set{i}$ arbitrarily. By the definition of the slope of a line, we have
	$$
	\beta_1(\vy) = \frac{f_j(\vy_{-j})-f_i(\vy_{-i})}{x_j-x_i},\ \beta_1(\tvy) = \frac{f_j(\tvy_{-j})-f_i(\tvy_{-i})}{x_j-x_i}.
	$$
	Taking the difference, and noting that $\vy_{-i} = \tvy_{-i}$, we get
	$$
	\beta_1(\vy)-\beta_1(\tvy) = \frac{f_j(\vy_{-j})-f_j(\tvy_{-j})}{x_j-x_i}.
	$$
	Note that the RHS is independent of $y_j$. Since we chose $j \in N\setminus\set{i}$ arbitrarily, it follows that $\beta_1(\vy)-\beta_1(\tvy)$ is independent of $\vy_{-i}$, implying that $h$ is completely additively separable. Thus, there must exist functions $\set{g_i}_{i \in N}$ such that $\beta_1(\vy) = \sum_{i \in N} g_i(\vy)$. 
	
	We now want to calculate $\beta_0$. Recall that for every $i \in N$, the outcome for agent $i$ is
	$$
	f_i(\vy_{-i}) = \beta_1(\vy) \cdot x_i + \beta_0 = g_i(y_i) \cdot x_i + \sum_{j \in N\setminus\set{i}} g_j(y_j) \cdot x_i + \beta_0.
	$$
	Since the LHS is independent of $y_i$, so must be the RHS. Hence, $\beta_0 + g_i(y_i)\cdot x_i$ must be independent of $y_i$ for each $i \in N$. This implies $\beta_0 = c - \sum_{i \in N} g_i(y_i)\cdot x_i$ for some constant $c$, as desired.
\end{proof}

\begin{proof}[Proof of Proposition~\ref{prop:impartial-not-gsp}]
	By Theorem~\ref{thm:impartial}, an impartial mechanism for simple linear regression with an admissible set of points must be of the form given in Equation~\eqref{eq:impartial-hyperplane}. We want to show that function $g_i^{\vx}$ is constant for each $i \in N$. Suppose for contradiction that for some agent $i \in N$, function $g_i^{\vx}$ is not constant. Thus, there exist $y^1_i$ and $y^2_i$ such that $g_i^{\vx}(y^1_i) \neq g_i^{\vx}(y^2_i)$. Fix an agent $j \in N\setminus\set{i}$ and $\vy_{-\set{i,j}} \in \bbR^{n-2}$. Let $\hat{y^1_j}$ and $\hat{y^2_j}$ denote the outcomes for agent $j$ under the impartial mechanism when agent $i$ reports $y^1_i$ and $y^2_i$, respectively, and agents in $N\setminus\set{i,j}$ report $\vy_{-\set{i,j}}$. That is, 
	$$
	\hat{y^t_j} = g_i^{\vx}(y^t_i) \cdot (x_j-x_i) + \textstyle\sum_{k \in N\setminus\set{i,j}} g_k^{\vx}(y_k) \cdot (x_j-x_k) + c^{\vx}, \forall t \in \set{1,2}.
	$$
	
	Note that $g_i^{\vx}(y^1_i) \neq g_i^{\vx}(y^2_i)$ and $x_i \neq x_j$ imply that $\hat{y^1_j} \neq \hat{y^2_j}$. Now, suppose that the private values of the agents are $(y^1_i,\hat{y^2_j},\vy_{-\set{i,j}})$. In this case, the outcome for agent $j$ is $\hat{y^1_j}$, which is different from her private value $\hat{y^2_j}$. If agent $i$ changes her report to $y^2_i$, her own outcome would not change, but the outcome for agent $j$ would change to $\hat{y^2_j}$, making agent $j$ strictly better off. Thus, the coalition $\set{i,j}$ successfully manipulates their reports, showing a violation of group strategyproofness.
	
	For the reverse direction, note that all constant functions are trivially group strategyproof.
\end{proof}

\subsection{Characterization of Strategyproof Mechanisms}

\begin{proof}[Proof of Lemma~\ref{lem:1dim-1agent}]
	Part~\ref{thm1agent-parta} is precisely the characterization of strategyproof mechanisms due to~\citet[Proposition 3]{Moul80}, applied to the case of a single agent.\footnote{Equivalently, one can use Proposition~2, which characterizes strategyproof and anonymous mechanisms, as anonymity becomes trivial in case of a single agent.}
	
	We would like to show that part~\ref{thm1agent-partb} is equivalent to part~\ref{thm1agent-parta}. It is easy to check that a function $\pi$ of the form given in part~\ref{thm1agent-parta} satisfies the conditions of part~\ref{thm1agent-partb}. We now show the converse. 
	
	Suppose that $\pi$ is continuous, and for every $y \in \bbR$, either $\pi(y) = y$ or $\pi$ is locally constant at $y$. Let $O = \set{y \in \bbR : \pi \text{ is locally constant at } y}$. We first show that $O$ is an open set. That is, if $y \in O$, there must exist a $\delta > 0$ such that $(y-\delta,y+\delta) \subseteq O$. Indeed, fix a $y \in O$. Because $\pi$ is locally constant at $y$, there must exist an $\eps > 0$ such that $\pi$ is constant in $[y-\eps,y+\eps]$. Set $\delta = \eps/2$, and pick an arbitrary $y' \in (y-\delta,y+\delta)$. We want to show that $y' \in O$. Note that for $\eps' = \eps/2$, $[y'-\eps',y'+\eps'] \subseteq [y-\eps,y+\eps]$. Hence, $\pi$ is constant in $[y'-\eps',y'+\eps']$, implying that $y' \in O$. This concludes the proof that $O$ is an open set.
	
	Next, we use the well-known fact that any open subset of $\bbR$ is a countable union of pairwise disjoint open intervals. That is, we can write $O = \cup_{k \in \bbN}\ (a_k,b_k)$, where $a_k,b_k \in \bbRbar$. For $k \in \bbN$, because $\pi$ is locally constant over $(a_k,b_k)$, and an open interval is a connected metric space, it follows that $\pi$ is globally constant over $(a_k,b_k)$. That is, there exists a value $t_k \in \bbR$ such that $\pi(y) = t_k$ for all $y \in (a_k,b_k)$. 
	
	We now show that for any $k \in \bbN$ with $a_k \neq b_k$ (i.e., the interval $(a_k,b_k)$ is non-empty), it cannot be the case that both $a_k$ and $b_k$ are finite. Suppose for contradiction that both are finite. Note that continuity of $\pi$ implies that $\pi(a_k) = \pi(b_k) = t_k$. However, since $a_k,b_k \notin O$, we have $\pi(a_k) = a_k$ while $\pi(b_k) = b_k$, which is a contradiction because $a_k \neq b_k$. Hence, for every $k \in \bbN$ with $a_k \neq b_k$, at least one of the two must lie in $\set{-\infty,\infty}$. 
	
	This leaves precisely five possibilities for the set $O$: $\emptyset$, $\bbR$, $(-\infty,a)$ for $a  \in \bbR$, $(b,\infty)$ for $b \in \bbR$, and $(-\infty,a) \cup (b,\infty)$ for $a,b \in \bbR$ with $b \ge a$. We know that $\pi$ is constant over each interval in $O$, and the identity function for every point outside $O$. For each of these five cases, we show that $\pi$ must be of the form given in part~\ref{thm1agent-parta} by identifying the corresponding constants $\alpha^1$ and $\alpha^2$.
	\begin{enumerate}
		\item $O = \emptyset$: $\pi$ is the identity function everywhere, i.e., $\alpha^1 = -\infty$ and $\alpha^2 = \infty$.
		\item $O = \bbR$: There exists $t \in \bbR$ such that $\pi(y) = t$ for all $y \in \bbR$. This corresponds to $\alpha^1 = \alpha^2 = t$. 
		\item $O = (-\infty,a)$ for $a \in \bbR$: Then $\pi(y) = y$ for all $y \ge a$. In particular, $\pi(a) = a$. Because $\pi$ is continuous and constant over $(-\infty,a)$, we have $\pi(y) = a$ for $y \in (-\infty,a)$. This corresponds to $\alpha^1 = a$ and $\alpha^2 = \infty$. 
		\item $O = (b,\infty)$ for $b \in \bbR$: Similarly to case (3), this corresponds to $\alpha^1 = -\infty$ and $\alpha^2 = b$. 
		\item $O = (-\infty,a) \cup (b,\infty)$ for finite $b \ge a$: As argued in the previous two cases, for $y \in (-\infty,a)$ we have $\pi(y) = \pi(a) = a$, and for $y \in (b,\infty)$ we have $\pi(y) = \pi(b) = b$. For $y \in [a,b]$, we have $\pi(y) = y$. This corresponds to $\alpha^1 = a$ and $\alpha^2 = b$. 
	\end{enumerate}
	This concludes our proof.
\end{proof}

\subsection{Efficiency of Strategyproof Mechanisms}

Figure~\ref{fig:mathematica} below verifies several claims made in the proof of Theorem~\ref{thm:efficiency-lower-bound} using Mathematica.

\begin{figure}[!ht]
	\centering
	\includegraphics[width=\textwidth]{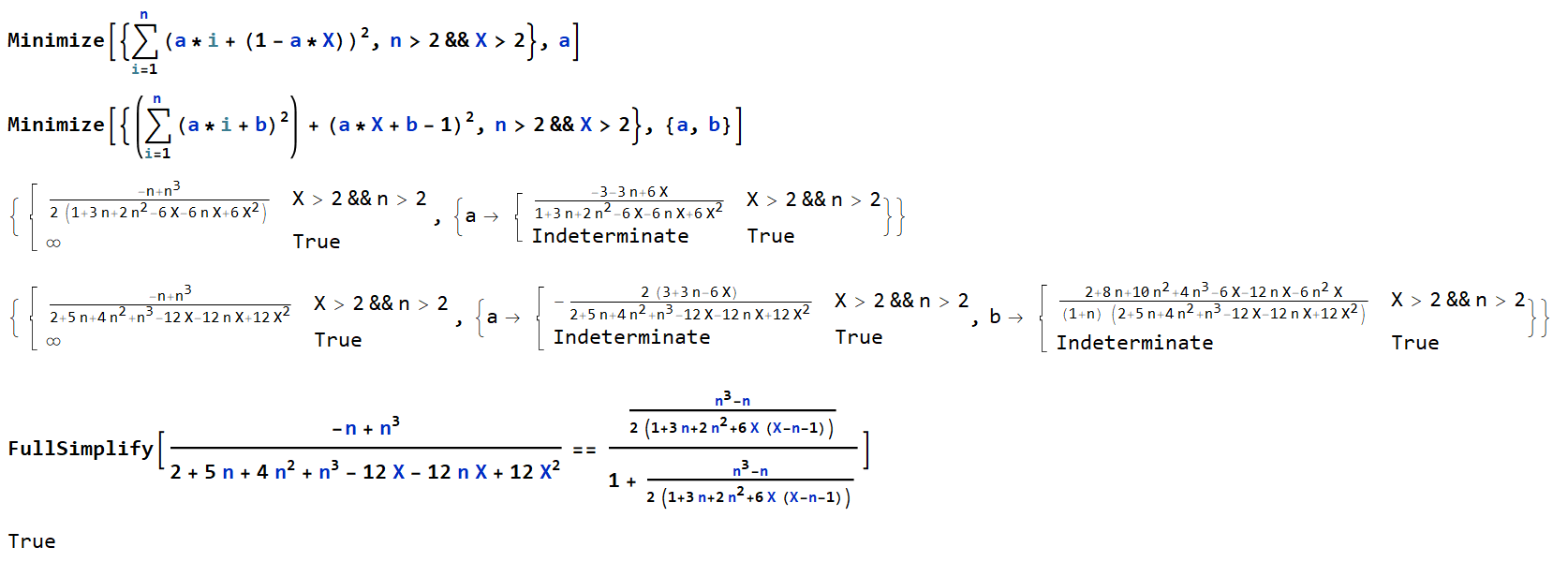}
	\caption{Verification of various claims through Mathematica}
	\label{fig:mathematica}
\end{figure}

We remark that none of the strategyproof mechanisms we study achieve a constant approximation. For instance, it is easy to show that $\erm$ is $n$-efficient.

\begin{proposition}\label{prop:erm-efficient}
	The $\erm$ mechanism is $n$-efficient.
\end{proposition}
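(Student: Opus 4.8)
The plan is to exploit the relationship between the $\ell_1$ and $\ell_2$ norms of the residual vector, together with the $\ell_1$-optimality of the hyperplane chosen by $\erm$. Let $\hat{\vbeta}$ denote the hyperplane returned by $\erm$ on input $\calD$, and let $\vbeta^*$ be any minimizer of the squared loss, so that $\rss(\calD,\vbeta^*) = \inf_{\vbeta} \rss(\calD,\vbeta)$. Write $r_i = y_i - \hat{\vbeta}^T\ \bvx(i)$ and $s_i = y_i - (\vbeta^*)^T\ \bvx(i)$ for the residuals under the two hyperplanes, and let $\vec{r} = (r_i)_{i \in N}$ and $\vec{s} = (s_i)_{i \in N}$ be the corresponding residual vectors in $\bbR^n$. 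By definition, $\rss(\calD,\hat{\vbeta}) = \|\vec{r}\|_2^2$ and $\rss(\calD,\vbeta^*) = \|\vec{s}\|_2^2$, while the $L_1$ loss of a hyperplane is exactly the $\ell_1$ norm of its residual vector.

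The argument then chains three elementary inequalities. First, for any $\vec{v} \in \bbR^n$ one has $\|\vec{v}\|_2 \le \|\vec{v}\|_1$, since the squared $\ell_1$ norm expands to the squared $\ell_2$ norm plus nonnegative cross terms; applied to $\vec{r}$ this gives $\rss(\calD,\hat{\vbeta}) = \|\vec{r}\|_2^2 \le \|\vec{r}\|_1^2$. Second, because $\erm$ minimizes the $L_1$ loss over all hyperplanes, and $\vbeta^*$ is merely one feasible choice, we have $\|\vec{r}\|_1 \le \|\vec{s}\|_1$. Third, Cauchy--Schwarz gives $\|\vec{s}\|_1 = \sum_{i \in N} |s_i| \le \sqrt{n}\cdot\|\vec{s}\|_2$.

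Combining these yields $\rss(\calD,\hat{\vbeta}) = \|\vec{r}\|_2^2 \le \|\vec{r}\|_1^2 \le \|\vec{s}\|_1^2 \le n\cdot\|\vec{s}\|_2^2 = n\cdot\inf_{\vbeta}\rss(\calD,\vbeta)$, which is precisely $n$-efficiency. I do not anticipate any genuine obstacle: the proof is a three-line norm comparison, and the only subtlety worth flagging is that it uses solely the $L_1$-optimality of the $\erm$ output, not the specific tie-breaking rule selecting among $L_1$-minimizers, so the bound holds uniformly across the entire family of minimizers. The factor of $n$ is exactly the worst-case ratio between $\|\cdot\|_1^2$ and $\|\cdot\|_2^2$ on $\bbR^n$, attained when the residuals are equal in magnitude across all $n$ points, which is what one should check to argue the bound cannot be improved by this argument.
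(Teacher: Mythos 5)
Your proof is correct and follows exactly the same three-step chain as the paper's: bound $\|\vec{r}\|_2^2 \le \|\vec{r}\|_1^2$, invoke $L_1$-optimality of the $\erm$ output to get $\|\vec{r}\|_1 \le \|\vec{s}\|_1$, and finish with Cauchy--Schwarz, $\|\vec{s}\|_1^2 \le n\|\vec{s}\|_2^2$. The only cosmetic difference is that the paper attributes the first inequality to the power mean inequality, while you justify it by expanding the squared $\ell_1$ norm; these are the same fact.
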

\begin{proof}
	Fix $\calD = (\vx(i),y_i)_{i \in N}$. Let $\vec{\beta^1}$ and $\vec{\beta^*}$ be the outputs of $\erm$ and OLS, respectively. Then, we have
	$$
	\textstyle\rss(\calD,\vec{\beta^1}) \le \left( \sum_{i \in N} \left|y_i-(\vec{\beta^1})^T\ \bvx(i)\right|  \right)^2 \le \left( \sum_{i \in N} \left|y_i-(\vec{\beta^*})^T\ \bvx(i)\right|  \right)^2 \le n \cdot \rss(\calD,\vec{\beta^*}),
	$$
	where the first inequality follows from the power mean inequality, the second inequality holds because $\vec{\beta^1}$ minimizes the sum of absolute losses, and the third inequality follows from the Cauchy-Schwarz inequality. This concludes the proof.
\end{proof}

\section{Computing Influence Bounds}
\label{appendix:influence-bounds}

\begin{algorithm}
	\SetAlgoLined
	\KwIn{Data points $\calD = (\vx(j),y_j)_{j \in N}$, agent $i \in N$.}
	\KwOut{$\ell_i, h_i$}
	$Z \gets$ set of hyperplanes $\vbeta$ which pass through $d+1$ agents from $N\setminus\set{i}$\;
	$t_{\vbeta} \gets \vbeta^T\ \bvx(i), \forall \vbeta \in Z$\;
	$L \gets \min_{\vbeta \in Z} t_{\vbeta}-1$\;
	$H \gets \max_{\vbeta \in Z} t_{\vbeta}+1$\;
	$V_L \gets M^{\vx}(L,\vy_{-i})$\;
	$V_H \gets M^{\vx}(H,\vy_{-i})$\;
	\uIf {$V_L = L$}{
		$\ell_i \gets - \infty$\;
	}
	\Else {
		$\ell_i \gets V_L$\;
	} 
	\uIf {$V_H = H$}{
		$h_i \gets \infty$\;
	}
	\Else {
		$h_i \gets V_H$\;
	}
	\Return $\ell_i,h_i$\;
	\caption{Computing Influence Bounds}
	\label{alg:infl-bounds}
\end{algorithm}

Our characterization result (Theorem~\ref{thm:our-char}) establishes existence of influence bounds $\ell_i,h_i \in \bbRbar$ for each agent $i$ as a function of the reports of the other agents. In this section, we address the problem of computing these influence bounds for a given strategyproof mechanism.

Fix $\vy_{-i}$. We begin from the simple observation that if $\ell_i$ is finite, then for a sufficiently low value of $y_i$ (any $y_i \le \ell_i$), we have that the outcome for agent $i$ will be $\hy_i = \med(y_i,\ell_i,h_i) = \ell_i$. If $\ell_i = -\infty$, then for all $y_i < h_i$, the outcome for agent $i$ will be $\hy_i = y_i$. Thus, if we can identify a {\em sufficiently low} value of $y_i$, we can check if $\hy_i$ is equal to $y_i$ (in which case $\ell_i = -\infty$), or $\hy_i$ is equal to some other value (in which case this value must be $\ell_i$). A symmetric observation holds for $h_i$. 

While it is difficult to pin down a sufficiently low value for an arbitrary strategyproof mechanism, we can do so for the class of strategyproof mechanisms which are guaranteed to pass through $d+1$ data points in $d+1$ dimensions (e.g., the generalized resistant hyperplane mechanisms). 

In this case, note that $\ell_i$, if finite, must be the point where a hyperplane containing {\em some} $d+1$ agents (excluding agent $i$) intersects the vertical line at $\vx(i)$. Thus, if we iterate through all hyperplanes passing through $d+1$ agents except agent $i$, and find their intersections with the vertical line at $\vx(i)$, then any value lower than the lowest intersection point will work as a sufficiently low value. Once again, a symmetric observation can be made for $h_i$. 

This provides an algorithm that runs in time that is polynomial in $n$, but exponential in $d$, and makes two calls to the strategyproof mechanism (one to identify $\ell_i$ and one for $h_i$). This is presented as Algorithm~\ref{alg:infl-bounds}.

\section{Quantile Regression is Not Strategyproof}
\label{appendix:qERM-cnt}

In this section, we show that quantile regression is not guaranteed to be strategyproof. In particular, we show that quantile regression with $q = 0.4$ violates strategyproofness. The coordinates for the $20$ data points shown in Figure~\ref{fig:cnt-qERM} are as follows.

\begin{center}
\begin{tabular}{c c c c c}
(-79.3, -45.8) & (-77.3, 89.5) & (-74.8, -87.4) & (-58.5, 14.3) &  (-33.2, -28.4) \\
(-31.5, 5.2) & (-8.0, -73.1) & (-1.7, -52.8) & (10.0, 88.6) &  (13.0, 13.3) \\
(13.9, 7.4) & (15.4, 39.4) & (18.5, -2.0) & (23.0, 6.6) &  (23.8, -33.0) \\ 
(24.2, -60.3) & (26.0, 49.5) & (39.5, 49.5) & (45.3, 88.9) &  (71.2, 33.2)
\end{tabular}
\end{center}

If the agents report truthfully, then the quantile regression mechanism with $q=0.4$ returns the solid line. If agent with data point $(13.9, 7.4)$ reports a very large value of $y$ (e.g., $2000$), then the output line becomes the dashed one, which is clearly beneficial for the manipulating agent.

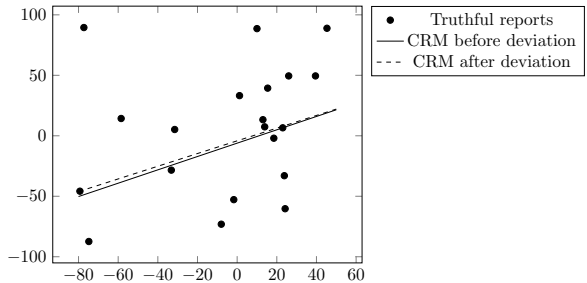
\begin{figure}[h]
	\centering
\begin{tikzpicture}[scale=0.6]
\begin{axis}[
    legend pos=outer north east
]
\addplot [only marks] table {
-79.3 -45.8
-77.3 89.5
-74.8 -87.4
-58.5 14.3
-33.2 -28.4
-31.5 5.2 
-8.0 -73.1 
-1.7 -52.8 
10.0 88.6 
13.0 13.3 
13.9 7.4 
15.4 39.4 
18.5 -2.0 
23.0 6.6 
23.8 -33.0 
24.2 -60.3 
26.0 49.5 
39.5 49.5 
45.3 88.9
1.2 33.2
};
\addlegendentry{Truthful reports}

\addplot [domain=-80:50, samples=2] {0.5518*x -6.0929};
\addlegendentry{CRM before deviation}

\addplot [domain=-80:50, samples=2,dashed] {0.5249*x - 4.1742};
\addlegendentry{CRM after deviation}

\end{axis}
\end{tikzpicture}
	\caption{Example of a beneficial manipulation under quantile regression with $q = 0.4$.}
    \label{fig:cnt-qERM}
\end{figure}

\end{document}